\newcommand{\bbR}{\mathbb{R}}
\newcommand{\biu}{\boldsymbol{u}}
\newcommand{\biv}{\boldsymbol{v}}
\newcommand{\biw}{\boldsymbol{w}}
\newcommand{\eat}{\tau_{\mathrm{eat}}}
\newcommand{\ldt}{\tau_{\mathrm{ldt}}}
\newcommand{\TC}{C}
\newcommand{\E}{\mathop{\mathbf{E}}}
\begin{document}

\title{Coverage centralities for temporal networks}
\author{Taro Takaguchi\inst{1,}\inst{2,}\thanks{All the authors contributed equally to the work.}\thanks{\emph{\email{t\_takaguchi@nii.ac.jp}}} \and Yosuke Yano\inst{2,}\inst{3,}$^{\rm a}$ \and Yuichi Yoshida\inst{1,}\inst{4,}$^{\rm a}$
}
\institute{
National Institute of Informatics, 2-1-2 Hitotsubashi, Chiyoda-ku, Tokyo 101-8430, Japan \and
JST, ERATO, Kawarabayashi Large Graph Project, 2-1-2 Hitotsubashi, Chiyoda-ku, Tokyo 101-8430, Japan \and
Department of Computer Science, The University of Tokyo, 3-7-1 Hongo, Bunkyo-ku, Tokyo 113-8654, Japan \and
Preferred Infrastructure, Inc., 2-40-1 Hongo, Bunkyo-ku, Tokyo, 113-0033, Japan
}

\abstract{
Structure of real networked systems, such as social relationship, can be modeled as temporal networks in which each edge appears only at the prescribed time. Understanding the structure of temporal networks requires quantifying the importance of a temporal vertex, which is a pair of vertex index and time. In this paper, we define two centrality measures of a temporal vertex based on the fastest temporal paths which use the temporal vertex. The definition is free from parameters and robust against the change in time scale on which we focus. In addition, we can efficiently compute these centrality values for all temporal vertices. Using the two centrality measures, we reveal that distributions of these centrality values of real-world temporal networks are heterogeneous. For various datasets, we also demonstrate that a majority of the highly central temporal vertices are located within a narrow time window around a particular time. In other words, there is a bottleneck time at which most information sent in the temporal network passes through a small number of temporal vertices, which suggests an important role of these temporal vertices in spreading phenomena.
\PACS{
      {89.75.Fb}{Structures and organization in complex systems}   \and
      {89.75.Hc}{Networks and genealogical trees} \and
      {64.60.aq}{Networks}
     }
}
\maketitle

\section{Introduction}\label{sec:intro}
Complex networks such as social networks, information networks, and biological networks have been intensively studied in the past decade to understand their behavior under certain dynamics and develop efficient algorithms for them.
See~\cite{Albert:2002,Newman:2003da,Boccaletti:2006tu,Newman:2010} for extensive surveys.

However, many real-world networks are actually temporal networks~\cite{Holme:2012,Holme2013}, in which a vertex communicates with another vertex at specific time over finite duration.
For example, social interaction between individuals, passenger flow between cities, and synaptic transmission between neurons can be represented as temporal networks.
When we assume that the focal dynamical processes on networks, such as information propagation, occur on a time scale comparable to the change in network structure, a temporal-network representation gives us a precise way to capture the processes. 
We can describe the advantage of working with a temporal network using the example shown in Fig.~\ref{fig:temporal-network}.
This temporal network consists of four vertices and eight edges, each of which has the time it appears.
Let us assume that it takes unit time to send the information from the tail to the head of an edge.
For example, suppose that the information starts to propagate from $v_1$ at time $1$.
Then, it reaches $v_2$ at time $2$ through edge $(v_1,v_2)$, waits at $v_2$ till time $3$, then reaches $v_3$ at time $4$ through edge $(v_2,v_3)$.
The information never reaches $v_4$ because the only edge incoming to $v_4$ is $(v_2, v_4)$ which appears at time $1$,
and $v_2$ does not have the information at that time.
However, if we ignore the temporal information and regard the network as a static directed network,
we mistakenly reach the conclusion that information in $v_1$ at time $1$ can reach $v_4$ because there is a directed path from $v_1$ to $v_4$.
Therefore, we cannot dismiss temporal information to properly understand the structure of temporal networks.

\begin{figure}[!t]
  \centering
  \includegraphics[width = 0.48\hsize]{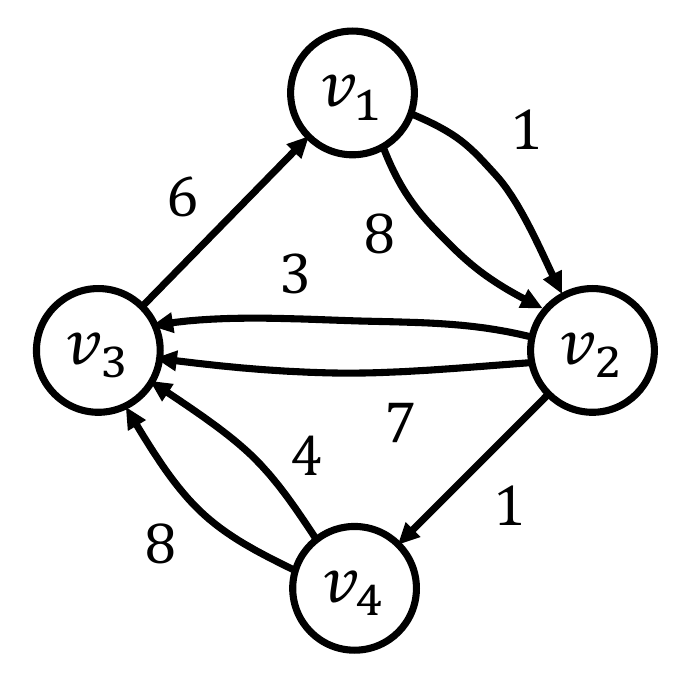}
  \caption{Schematic of an example of temporal network. The number associated with each edge represents the time at which the edge appears.}
  \label{fig:temporal-network}
\end{figure}

An important notion studied to understand the structure of (static) networks is vertex centrality, which measures the importance of a vertex.
The following reasons motivate the study of centralities.
First, we can use centralities to find important vertices in several applications such as suppressing the epidemics~\cite{Barrat:2008,Pastor-Satorras:2014} or maximizing the spread of influence~\cite{Kempe:2003iu}.
Second, we can use them to understand the structure of real-world networks by examining the difference between the distributions of the centrality values in such networks and in the randomized networks (e.g, \cite{Barrat:2004,Guimera:2005}).
Third, we can examine the validity of generative network models by investigating the distribution of centralities of the generated network (e.g., \cite{Kitsak:2007,Kumpula:2007}).

Hence, it is natural to study centralities for temporal networks.
Since the most fundamental difference between a static network and a temporal network is that the latter involves time,
we define the centrality of a vertex at a specific time.
To distinguish from a vertex,
we call the pair of a vertex and time a temporal vertex.
In the literature, multiple centrality notions of temporal vertices based on temporal paths~\cite{Holme:2012} have been proposed.
Examples include the generalizations of the centrality notions to temporal networks, such as betweenness~\cite{Tang:2010,Tang:2010go,Kim:2012,Alsayed:2015}, closeness~\cite{Tang:2010go,Pan:2011,Kim:2012}, communicability~\cite{Grindrod:2011,Estrada:2013,Grindrod:2013}, efficiency~\cite{Tang:2009}, random-walk centrality~\cite{Rocha:2014}, and win--lose score~\cite{Motegi:2012} (see Ref.~\cite{Nicosia:2013} for a review of some of them).
However, each previous centrality notion suffers from at least one of the following two issues:
\begin{enumerate}
\item We need to carefully set parameter values and (or) the time interval within which we consider temporal paths.
\item It is inefficient to compute the centrality.
\end{enumerate}

For the first issue, the time interval length especially requires careful tuning; if the time interval is too wide, then the centrality of a temporal vertex $\biv$ becomes negligible because most of the paths finish before or start after $\biv$ appears.
By contrast, if the time interval is too narrow, again the centrality of $\biv$ becomes negligible because paths can pass by only a tiny fraction of vertices in the time interval.
It should be noted that our centrality measures are free from any parameters not because we consider the centrality of temporal vertex. The centrality measures of a temporal vertex proposed in the previous work~\cite{Tang:2010,Tang:2010go,Kim:2012,Alsayed:2015,Pan:2011,Grindrod:2011,Estrada:2013,Grindrod:2013,Tang:2009,Rocha:2014,Motegi:2012,Nicosia:2013} require some parameters for different reasons.
Our centrality measures get around this issue by focusing on the local structure of temporal paths around the focal temporal vertex.
For the second issue, even if we compromise to use an approximation,
computing the approximated centrality value of a single temporal vertex requires computational time at least linear to network size~\cite{Bader:2007td}.

In this paper, we propose two novel centrality notions for temporal networks that resolve these issues.
The first one, called temporal coverage centrality (TCC), measures the fraction of pairs of (normal) vertices that have at least one fastest temporal path that uses the focal temporal vertex.
The second one, called temporal boundary coverage centrality (TBCC), measures the fraction of pairs of vertices that have a unique fastest temporal path, which uses the focal temporal vertex.

Our centrality notions address the two issues described above in the following way.
For the first issue, TCC and TBCC are free from setting of any parameters or time interval.
To calculate the TCC or TBCC value of a temporal vertex $\biv = (v,\tau)$,
we only have to run over all pairs of vertices $(u,w)$.
Namely, we consider temporal vertices $\biu = (u, \tau_u)$ and $\biw = (w,\tau_w)$,
where $\tau_u$ is the latest time at which we can send information from $u$ so that it reaches $v$ at time $\tau$,
and $\tau_w$ is the earliest time at which we can receive information at $w$ that is sent from $v$ at time $\tau$.
It should be noted that, if we fix focal temporal vertex $\biv$, $\tau_u$ and $\tau_w$ are uniquely determined by $u$ and $w$, respectively,
and that we thus do not have to care about the time interval around $\biv$.
Then, we check whether the information sent from $\biu = (u,\tau_u)$ to $\biw = (w,\tau_w)$ can or should drop by $\biv$.

For the second issue, although the definitions of TCC and TBCC might look complicated and hard to compute, this is not the case.
Indeed, computing TCC and TBCC can be reduced to the problem of deciding whether or not there is a directed path between queried vertices in an associated directed network (see Section~\ref{sec:dag} for details).
The latter problem is well studied in the database community \cite{simon1988improved,cohen2003reachability,yildirim2010grail,van2011memory,Yano:2013fq}, and it can be solved by constructing an index of the directed network, which computes the reachability between any pair of nodes by using information of the reachability between a fraction of node pairs.
If it suffices to use approximations to the TCC and TBCC values,
we only need to query the index at most ${\rm O}(\log^2 N)$ times, where $N$ is the total number of vertices in the network (see Appendix A). Since we can efficiently process queries to the index in practice, this method is advantageous compared to the ${\rm O}(N)$ time for approximating previous centrality notions.

With the aid of our centrality notions,
we are able to compute the centrality of all temporal vertices in a temporal network and analyze the statistics of the whole network.
Using TBCC, we demonstrate that real-world temporal networks have a small number of temporal vertices without which information propagates more slowly.
Surprisingly, we reveal that the temporal vertices of large centrality values form a narrow time region,
and this time region seemingly corresponds to the beginning or the end of a time interval in which temporal edges occur in a bursty manner.
In addition, by using TCC, we show that the remaining part of the temporal network is highly redundant in the sense that there are many ways to send information as quickly as possible.
Although these properties are recognized in the network science community \cite{Trajanovski:2012,Takaguchi:2012,Scellano:2013},
we quantitatively confirm it for the first time using our centrality notions.
We also demonstrate that the removal of temporal vertices according to their TBCC values is effective for hindering the propagation of information for both delaying and stopping it.

The paper is organized as follows.
In Section~\ref{sec:pre}, we introduce basic notions of temporal networks and the directed network associated with a temporal network.
Section~\ref{sec:centrality} introduces our centrality notions for temporal vertices,
and Section~\ref{sec:compute} explains detailed methods of computing our centrality notions.
Section~\ref{sec:result} is dedicated to demonstrating our experimental results.
We give the conclusion in Section~\ref{sec:conclusion}.

\section{Preliminaries about temporal networks}\label{sec:pre}
\subsection{Basic notions}

We introduce the terminology and symbols to describe temporal network structure, which basically follow those used in Ref.~\cite{Wu:2014}.
 
For integer $k$, let $[k]$ denote the set $\{1,2,\ldots,k \}$.
We define $\bbR_+$ as the set of non-negative real numbers.

Let $V$ be the set of vertices.
A temporal edge is represented by quadruplet $e = (u, v, \tau, \lambda)$, where $u, v \in V$, $\tau \in \bbR$, and $\lambda \in \bbR_+$.
For temporal edge $e = (u,v,\tau,\lambda)$,
we refer to $\tau$, $\lambda$, and $\tau + \lambda$ as the starting time, the duration, and the ending time of $e$, respectively.
Temporal network $G = (V, E)$ is a pair of set of vertices $V$ and set of temporal edges $E$.

When we study temporal networks, a vertex at a certain time is of interest.
Therefore, we define a temporal vertex by a pair of vertex $v \in V$ and time $\tau \in \bbR$.
In the following, we always use bold symbols such as $\biv$ to denote temporal vertices.
For temporal vertex $\biv = (v,\tau)$, we denote the time $\tau$ by $\tau(\biv)$.

Temporal path $P$ in temporal network $G=(V,E)$ is defined as an alternating sequence of temporal vertices and edges $P = \langle \biv_1,e_1,\biv_2,e_2,\ldots,e_{k-1},\biv_{k} \rangle$ satisfying the following properties.
Let $\biv_i = (v_i,\tau_i)$ for each $i \in [k]$.
Then for each $i \in [k-1]$,
the $i$-th temporal edge $e_i$ is of the form $e_i = (v_i,v_{i+1},\tau,\lambda)$ such that $\tau_i \leq \tau$ and $\tau + \lambda \leq \tau_{i+1}$.
We define the starting time, the duration, and the ending time of $P$ as $\tau_1$, $\tau_k - \tau_1$, and $\tau_k$, respectively.
For two temporal vertices $\biu$ and $\biv$,
relationship $\biu \leadsto \biv$ indicates that there is a temporal path from $\biu$ to $\biv$.

We define the earliest arrival time at vertex $w$ when departing from temporal vertex $\biv$ by the smallest $\tau \in \bbR$ such that $\biv \leadsto (w,\tau)$,
and we denote it by $\eat(\biv,w)$.
If there is no such $\tau$, we define $\eat(\biv,w) = \infty$.
Similarly, we define the latest departure time from a vertex $u$ for arriving at $\biv$ as the largest $\tau \in \bbR$ such that $(u,\tau) \leadsto \biv$,
and we denote it by $\ldt(\biv,u)$.
If there is no such $\tau$, we define $\ldt(\biv,u) = -\infty$.
A fastest temporal path from temporal vertex $\biv$ to vertex $w$ is a temporal path from $\biv$ to $(w,\eat(\biv,w))$,
and a fastest temporal path from a vertex $u$ to a temporal vertex $\biv$ is a temporal path from $(u,\ldt(\biv,u))$ to $\biv$.

\subsection{Directed acyclic graph representation}\label{sec:dag}
A directed acyclic graph (DAG) is a directed network with no directed cycle.
In this section, we describe the DAG representation of a temporal network,
which is useful when solving problems related to temporal paths and describing the centrality notions we will introduce in Section~\ref{sec:centrality}.
This DAG representation and its variants have been considered in the analysis of temporal networks~\cite{Kempe:2000,Kostakos:2009,Kim:2012,Pfitzer:2013,Scholtes:2014,Speidel:2015}.

For temporal network $G = (V, E)$, the DAG representation of $G$, denoted by $\widehat{G} = (\widehat{V},\widehat{E})$, is constructed as follows.
A vertex in $\widehat{G}$ represents a temporal vertex in $G$.
For each $v \in V$, we first add to $\widehat{V}$ two vertices corresponding to the temporal vertices $(v,-\infty)$ and $(v,\infty)$.
For each temporal edge $(u,v,\tau,\lambda) \in E$,
we add to $\widehat{V}$ two vertices corresponding to temporal vertices $\biu=(u,\tau)$ and $\biv=(v,\tau+\lambda)$ (if they do not exist in $\widehat{V}$) and add edge $(\biu,\biv)$ to $\widehat{E}$.
Finally, for each pair of temporal vertices $\biu = (u,\tau),\biu' = (u,\tau')$ sharing the same vertex $u$,
we add edge $(\biu,\biu')$ to $\widehat{E}$ if there is no temporal vertex of the form $(u,\tau'')$ in $\widehat{V}$ such that $\tau < \tau'' < \tau'$.

Figure~\ref{fig:dag-representation} illustrates DAG representation $\widehat{G}$ of temporal network $G$ shown in Fig.~\ref{fig:temporal-network}.
The vertex in the $i$-th row and the $j$-th column corresponds to the temporal vertex $(v_i,j)$.
For example, since there is temporal edge $(v_1,v_2,1,1)$ in $G$,
we have an edge from $(v_1,1)$ to $(v_2,2)$ in $\widehat{G}$.
For the $i$-th row, the leftmost and rightmost vertices correspond to the temporal vertices $(v_i,-\infty)$ and $(v_i,\infty)$, respectively.

\begin{figure}[!t]
\centering
\includegraphics[width = 0.8\hsize]{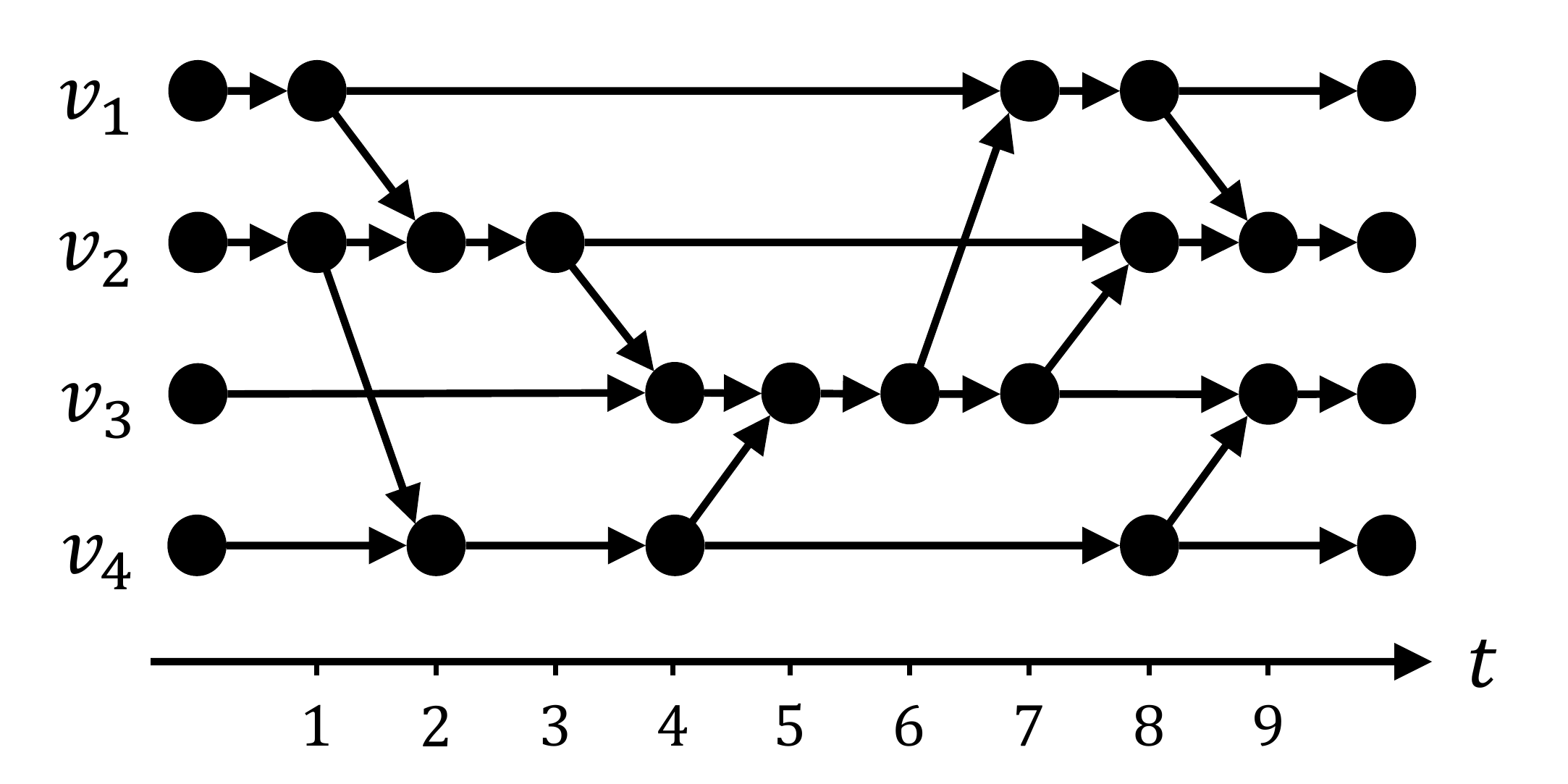}
\caption{DAG representation of the temporal network shown in Fig.~\ref{fig:temporal-network}.}
\label{fig:dag-representation}
\end{figure}

From the construction of the DAG representation, we have the following useful properties:
\begin{lemma}\label{lem:is-DAG}
  Let $G$ be a temporal network.
  Then, $\widehat{G}$ is a DAG.
\end{lemma}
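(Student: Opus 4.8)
The plan is to show that $\widehat{G}$ has no directed cycle by exhibiting a quantity that strictly increases along every edge of $\widehat{G}$. The natural candidate is the time coordinate of each temporal vertex, so first I would define for each vertex $\hat{x} \in \widehat{V}$ corresponding to the temporal vertex $(v,\tau)$ the value $t(\hat{x}) = \tau$ (allowing $\pm\infty$ as the construction does). The claim I want is that $t$ is strictly increasing along every edge: if $(\hat{x},\hat{y}) \in \widehat{E}$, then $t(\hat{x}) < t(\hat{y})$. Once this is established, any directed cycle $\hat{x}_1 \to \hat{x}_2 \to \cdots \to \hat{x}_1$ would force $t(\hat{x}_1) < t(\hat{x}_1)$, a contradiction, so no cycle can exist.

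The key step is therefore to verify the monotonicity claim by inspecting the two types of edges added in the construction. For an edge coming from a temporal edge $(u,v,\tau,\lambda) \in E$, we added $(\biu,\biv)$ with $\biu = (u,\tau)$ and $\biv = (v,\tau+\lambda)$, so $t(\biu) = \tau$ and $t(\biv) = \tau + \lambda$; since $\lambda \in \bbR_+$, I need $\lambda > 0$ strictly to conclude $t(\biu) < t(\biv)$. For an edge of the second type, added between $\biu = (u,\tau)$ and $\biu' = (u,\tau')$ sharing the same vertex $u$, the construction adds $(\biu,\biu')$ precisely when there is no intermediate temporal vertex, which (reading the construction in the intended orientation) means $\tau < \tau'$, so again $t(\biu) < t(\biu')$. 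In both cases the time strictly increases, giving the claim.

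The main obstacle I anticipate is the edge case $\lambda = 0$ for the first edge type: if a temporal edge has zero duration, then $\biu$ and $\biv$ sit at the same time $\tau$ but on different vertices $u \neq v$, and strict monotonicity of $t$ alone does not rule out a cycle through such simultaneous temporal vertices. To handle this cleanly I would either assume $\lambda > 0$ (a mild and standard assumption that edge traversal takes positive time, consistent with the unit-time example in the introduction), or refine the potential to a lexicographic order on $(\tau, v)$ together with an argument that the graph induced on any single time slice is acyclic. I expect the paper intends the positive-duration reading, so the refinement is a safeguard rather than the main line. With the potential function in hand, the acyclicity conclusion is immediate, so the real content is entirely in the per-edge monotonicity check.
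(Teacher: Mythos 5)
Your proof takes essentially the same approach as the paper, whose entire argument is the one-line observation that every edge of $\widehat{G}$ goes from $(u,\tau)$ to $(v,\tau')$ with $\tau < \tau'$, so no directed cycle can exist. Your caveat about zero-duration edges is in fact a genuine refinement rather than a needless worry: the paper defines $\lambda \in \bbR_+$ with $\bbR_+$ the \emph{non-negative} reals, so a temporal edge with $\lambda = 0$ would break strict monotonicity (and two opposite zero-duration edges at the same instant would create a $2$-cycle), meaning the lemma implicitly assumes $\lambda > 0$ --- a point the paper's proof silently passes over.
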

\begin{proof}
  This is clear as we only add edges of the form $((u,\tau), (v,\tau'))$,
  where $\tau < \tau'$.
\end{proof}
\begin{lemma}\label{lem:DAG-representation}
  Let $G$ be a temporal network.
  Suppose that temporal vertices $\biu$ and $\biv$ have corresponding vertices in $\widehat{G}$.
  Then, there is a temporal path from $\biu$ to $\biv$ in $G$ if and only if there is a directed path from $\biu$ to $\biv$ in $\widehat{G}$.
\end{lemma}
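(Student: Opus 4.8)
The plan is to prove both implications by exploiting the fact that $\widehat{E}$ contains exactly two kinds of edges: \emph{traversal} edges $((u,\tau),(v,\tau+\lambda))$ coming from a temporal edge $(u,v,\tau,\lambda)\in E$, and \emph{waiting} edges $((u,\tau),(u,\tau'))$ joining a temporal vertex of $u$ to its immediate successor in time. The first fact I would record is a waiting observation: for a fixed $u\in V$ the vertices of $\widehat{G}$ of the form $(u,\cdot)$ are totally ordered by time and each is joined to its successor by a waiting edge, so whenever $(u,\tau)$ and $(u,\tau')$ both appear in $\widehat{G}$ with $\tau\le\tau'$ there is a directed path between them using only waiting edges. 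Intuitively a waiting edge means ``stay at $u$ and do nothing'', so traversing waiting edges never leaves the vertex and only advances time.

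For the ``only if'' direction, I would start from a temporal path $P=\langle \biv_1,e_1,\ldots,e_{k-1},\biv_k\rangle$ with $\biv_i=(v_i,\tau_i)$ and $e_i=(v_i,v_{i+1},\sigma_i,\lambda_i)$, so that by definition $\tau_i\le\sigma_i$ and $\sigma_i+\lambda_i\le\tau_{i+1}$. Each $e_i$ contributes, by construction, the traversal edge $((v_i,\sigma_i),(v_{i+1},\sigma_i+\lambda_i))$ to $\widehat{E}$. I would then glue these traversal edges together with waiting edges: the waiting observation applied to $\sigma_i+\lambda_i\le\tau_{i+1}\le\sigma_{i+1}$ gives a waiting path from the head $(v_{i+1},\sigma_i+\lambda_i)$ of the $i$-th traversal edge to the tail $(v_{i+1},\sigma_{i+1})$ of the $(i+1)$-st. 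Prepending a waiting path from $\biu=(v_1,\tau_1)$ to $(v_1,\sigma_1)$ (legal since $\tau_1\le\sigma_1$) and appending one from $(v_k,\sigma_{k-1}+\lambda_{k-1})$ to $\biv=(v_k,\tau_k)$ (legal since $\sigma_{k-1}+\lambda_{k-1}\le\tau_k$) produces a directed path from $\biu$ to $\biv$ in $\widehat{G}$.

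For the ``if'' direction, I would take a directed path in $\widehat{G}$ from $\biu$ to $\biv$ and split it at its traversal edges. Reading off the traversal edges in order yields temporal edges $f_1,\ldots,f_p$; since consecutive traversal edges are separated only by waiting edges, the head vertex of $f_j$ coincides with the tail vertex of $f_{j+1}$ and, because times are nondecreasing along every edge of $\widehat{G}$ (cf.\ Lemma~\ref{lem:is-DAG}), the ending time of $f_j$ is at most the starting time of $f_{j+1}$. From this I would assemble a temporal path $\langle \biu, f_1,\ldots,f_p,\biv\rangle$, choosing each intermediate temporal vertex to be the arrival vertex of the preceding $f_j$, and check the two inequalities $\tau_i\le\sigma_i$ and $\sigma_i+\lambda_i\le\tau_{i+1}$ directly from the monotonicity of times along the waiting segments at the two ends and in between. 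The degenerate case $p=0$, where the $\widehat{G}$-path consists of waiting edges only, forces $\biu$ and $\biv$ to share a vertex with $\tau(\biu)\le\tau(\biv)$, which corresponds to merely waiting at that vertex.

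The main obstacle I expect is the bookkeeping in the ``if'' direction: an arbitrary directed path in $\widehat{G}$ may interleave waiting and traversal edges in any pattern, so I must argue cleanly that collapsing every maximal run of waiting edges leaves a well-formed alternating sequence and that the chosen times satisfy the two path inequalities at every step, including the boundary cases at $\biu$ and $\biv$ and the waiting-only case. By comparison, the ``only if'' direction and the waiting observation are routine.
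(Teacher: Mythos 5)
Your proof is correct and follows essentially the same route as the paper: temporal edges become traversal edges of $\widehat{G}$, which are glued together by waiting edges along each vertex's time-ordered copies, and the converse reads the traversal edges back off a directed path. The paper's own proof is just a terser rendering of this (it normalizes the temporal path's vertex times and dismisses the converse as immediate from the correspondence), so your explicit waiting-path observation and handling of the waiting-only degenerate case add care but not a different idea.
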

\begin{proof}
  Let $P = \langle \biv_1,e_1,\biv_2,\ldots,e_{k-1},\biv_k \rangle$ be a temporal path from $\biv_1 = \biu$ to $\biv_k = \biv$.
 Without loss of generality, we assume that the time of $\biv_i$ is equal to the starting time of $e_i$ or the ending time of $\biv_{i-1}$. Then, each $\biv_i$ has a corresponding vertex in $\widehat{G}$.
  Let $\biv_i = (v_i,\tau^v_i)$ for each $i \in [k]$ and $e_i = (v_i,v_{i+1},\tau^e_i,\lambda^e_i)$ for each $i \in [k-1]$.
  Then, there is a directed path $(v_1,\tau^v_1), (v_1,\tau^e_1), (v_2,\tau^e_1+\lambda^e_1), (v_2,\tau^v_2), (v_2,\tau^e_1), (v_2,\tau^e_2 + \lambda^e_2),\ldots,(v_k,\tau^v_k)$ in $\widehat{G}$.
  The converse easily follows the correspondence explained above.
\end{proof}

\section{Temporal coverage centralities}\label{sec:centrality}

In this section, we introduce the temporal coverage centrality and the temporal boundary coverage centrality.

\subsection{Temporal coverage centrality}\label{sec:temporal-coverage-centrality}
Before defining TCC, we define the notion of coverage in temporal networks by generalizing its original version in static networks \cite{Yoshida:2014} as follows.
Let $\biv$ be a temporal vertex and $u,w$ be vertices.
Let $\biu = (u,\ldt(\biv,u))$ and $\biw = (w,\eat(\biv,w))$.
Then, we say that $\biv$ covers node pair $(u,w)$ if the following two conditions hold:
\begin{enumerate}
\item $\eat(\biu,w) = \eat(\biv,w)$,
\item $\ldt(\biw,u) = \ldt(\biv,u)$. 
\end{enumerate}
In words, the earliest arrival time at $w$ when departing from $\biu$ does not change even if we drop by $\biv$ (condition 1),
and the latest departure time from $u$ for arriving at $\biw$ does not change even if we drop by $\biv$ (condition 2).
Figure~\ref{fig:temporal-coverage-centrality} explains condition~1.
Let us focus on $\biv = (v_1,7)$.
Then, temporal vertices $\biu = (v_4,\ldt(\biv,v_4)) = (v_4,4)$ and $\biw = (v_2,\eat(\biv,v_2)) = (v_2,9)$ are determined as shown in the figure.
We observe that,
if we depart from $\biu$ and are not forced to drop by $\biv$, we can arrive at $\biw' = (v_2,8)$, which is earlier than $\biw$.
Hence, node pair $(u,w)$ is not covered by $\biv$ but by $\biw'$.

\begin{figure}[!t]
  \centering
  \includegraphics[width = 0.8\hsize]{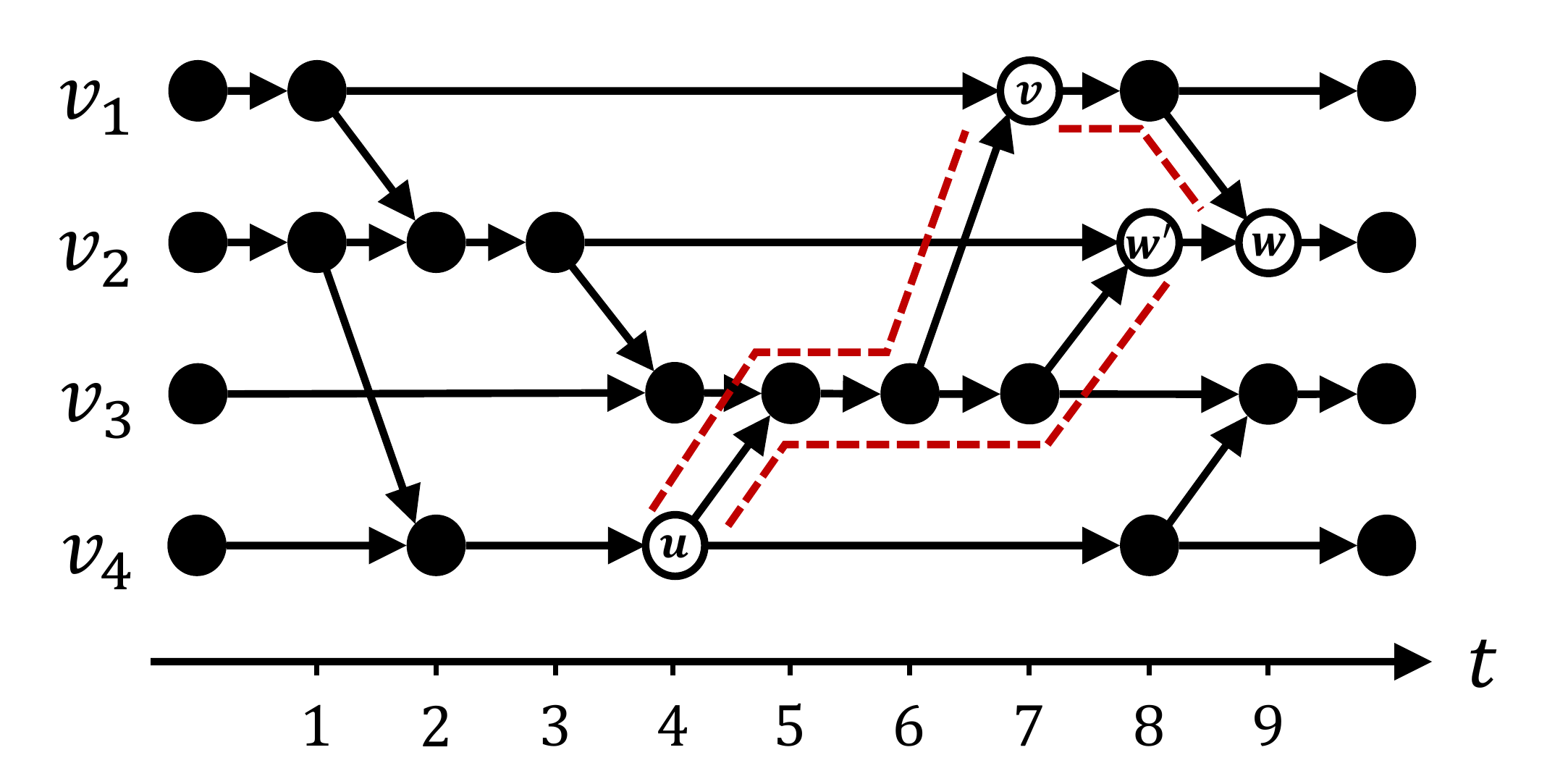}
  \caption{Schematic describing the concept of temporal coverage centrality.
  The dashed polygonal lines represent the two temporal paths from vertex $v_4$ to $v_2$ that contain temporal vertex $\biv$ in their durations.}
  \label{fig:temporal-coverage-centrality}
\end{figure}

On the basis of this notion of coverage, the TCC value of $\biv$ is defined as the fraction of pairs $(u,w) \in V \times V$ that are covered by $\biv$.
By definition, the TCC value of a temporal vertex takes a real number in $[0, 1]$. If the TCC value is close to unity, the temporal vertex is said to be central in the sense that it covers many pairs of nodes.
The formal definition is given in Algorithm~\ref{alg:temporal-coverage-centrality} in an algorithmic manner.

\begin{algorithm}[!t]
  \caption{(The TCC value of $\biv$)}
  \label{alg:temporal-coverage-centrality}
  \begin{algorithmic}[1]
    \STATE $r \leftarrow 0$.
    \FOR{$u \in V$ and $w \in V$}
      \STATE $\biu \leftarrow (u,\ldt(\biv,u))$.
      \STATE $\biw \leftarrow (w,\eat(\biv,w))$.
      \IF{$\eat(\biu, w) = \tau(\biw)$ and $\ldt(\biw,u) = \tau(\biu)$} \label{line:coverage-check}
        \STATE $r \leftarrow r + 1$.
      \ENDIF
    \ENDFOR
    \STATE \textbf{return} $r / |V|^2$.
  \end{algorithmic}
\end{algorithm}

\subsection{Temporal boundary coverage centrality}\label{sec:temporal-boundary-coverage-centrality}

Let $\biv = (v,\tau)$ be a temporal vertex and $u,w$ be vertices.
Let $\biu = (u,\ldt(\biv,u))$ and $\biw = (w,\eat(\biv,w))$.
Even if the TCC value of $\biv$ is large,
it does not always imply that removing the temporal edges involving $\biv$ makes $\eat(\biu,w)$ larger or $\ldt(\biw,u)$ smaller.
One particular reason for this is that sometimes we can reach $v$ from $\biu$ earlier than $\tau$ and can leave $v$ later than $\tau$ to reach $\biw$ (see temporal vertices $\biv_2$ and $\biv_3$ in Fig.~\ref{fig:temporal-boundary-coverage-centrality}).
In some applications, we may want to regard such $\biv$ as unimportant.

To address this issue, we define TBCC by imposing additional criteria to the notion of coverage as follows.
Note that, if focal temporal vertex $\biv$ is an example of the situation stated in the previous paragraph,
then $\eat(\biu,v) < \tau$ or $\ldt(\biw,v) > \tau$ should hold.
Hence, we define that a pair $(u,w)$ of vertices is covered at a boundary by temporal vertex $\biv$ if the following hold:
\begin{enumerate}
\item $(u,w)$ is covered by $\biv$, and
\item $\eat(\biu,v) = \tau$ or $\ldt(\biw,v) = \tau$.
\end{enumerate}

We explain this definition using the example shown in Fig.~\ref{fig:temporal-boundary-coverage-centrality}.
Let $\biv_i = (v,\tau_i)$ for $i \in [4]$.
Note that all $\biv_i$ $(i \in [4])$ cover vertex pair (u, w) as $\biu = (u,\ldt(\biv_i, u))$ and $\biw = (w,\eat(\biv_i,w))$ hold for all $i \in [4]$.
In addition, note that all $\biv_i$ cover $(u,w)$. We can see that $\biv_1$ and $\biv_4$ cover $(u,w)$ at the boundary because $\eat(\biu,v) = \tau_1$ and $\ldt(\biw,v) = \tau_4$.
By contrast, $\biv_2$ and $\biv_3$ do not cover $(u,w)$ at the boundary.

On the basis of this notion of coverage at the boundary, the TBCC value of $\biv$ is defined as the fraction of pairs $(u,w)$ that are covered at the boundary by $\biv$.
Similar to TCC, the TBCC value of a temporal vertex takes a real number in $[0, 1]$ by definition.
The formal definition is given in Algorithm~\ref{alg:temporal-boundary-coverage-centrality} in an algorithmic manner.

In closing this section, it should be noted the difference between the previous notion of the temporal betweenness centrality and TCC (and TBCC). 
The main difference lies in the normalization of the number of vertex pairs covered by the temporal vertex.
The definitions of TCC and TBCC do not normalize the number of such vertex pairs with the number of the fastest temporal paths, whereas the previous temporal betweenness centrality divides the number of the fastest paths that use the focal temporal vertex by the total number of the fastest temporal paths in the focal time window, as the betweenness centrality for static networks does~\cite{Tang:2010,Tang:2010go,Kim:2012,Alsayed:2015}.
We took such definitions of TCC and TBCC for the following reasons.
First, TCC and TBCC become free from any parameters because we do not need to set the time window to count the number of the relevant fastest temporal paths for the normalization. Second, the TCC and TBCC values are easy to interpret as the fraction of the vertex pairs that have a fastest temporal path using the focal temporal vertex.

\begin{figure}[!t]
  \centering
  \includegraphics[width = 0.75\hsize]{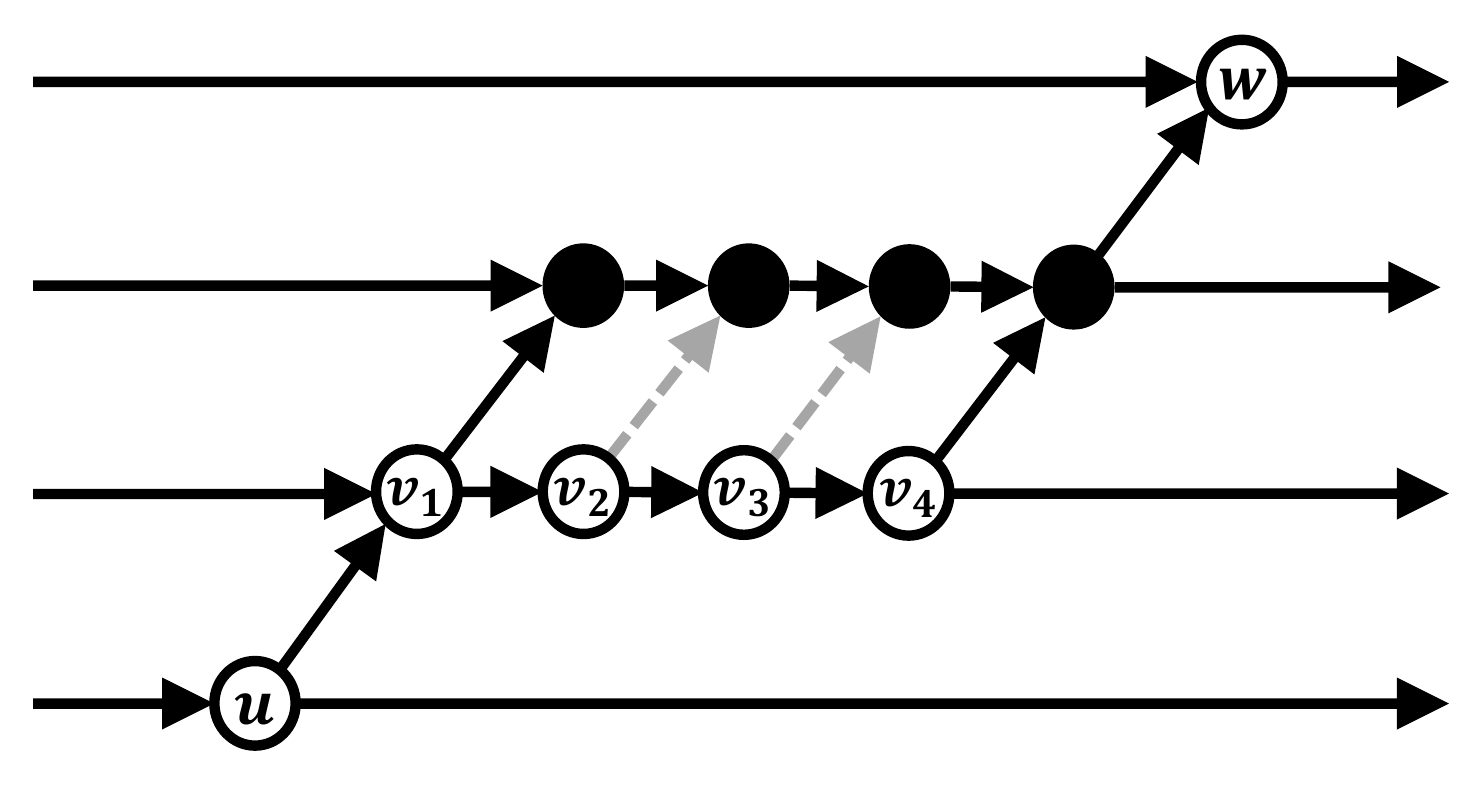}
  \caption{Schematic describing the concept of temporal boundary coverage centrality.
  The dashed arrows represent the temporal edges that do not contribute the centrality values of the source temporal vertices.}
  \label{fig:temporal-boundary-coverage-centrality}
\end{figure}

\begin{algorithm}[!t]
  \caption{(The TBCC value of $\biv$)}
  \label{alg:temporal-boundary-coverage-centrality}
  \begin{algorithmic}[1]
    \STATE $r \leftarrow 0$.
    \FOR{$u \in V$ and $w \in V$}
      \STATE $\biu \leftarrow (u,\ldt(\biv,u))$.
      \STATE $\biw \leftarrow (w,\eat(\biv,w))$.
      \IF{$\eat(\biu, w) = \tau(\biw)$ and $\ldt(\biw,u) = \tau(\biu)$}
        \IF{$\eat(\biu,v) = \tau(\biv)$ or $\ldt(\biw,v) = \tau(\biv)$} \label{line:coverage-at-boundary-check}
          \STATE $r \leftarrow r + 1$.
        \ENDIF
      \ENDIF
    \ENDFOR
    \textbf{return} $r / |V|^2$.
  \end{algorithmic}
\end{algorithm}

\section{Computing temporal coverage centralities}\label{sec:compute}

We can straightforwardly calculate TCC and TBCC according to Algorithms~\ref{alg:temporal-coverage-centrality} and~\ref{alg:temporal-boundary-coverage-centrality}.
In this section, to manage large temporal networks, we give efficient methods for computing TCC and TBCC on the basis of a graph--indexing technique developed recently in the database community~\cite{Yu:2010}, in particular, the method proposed in \cite{Yano:2013fq}.
The key idea is in how to speed up the computation of $\eat$ and $\ldt$ in Algorithms~\ref{alg:temporal-coverage-centrality} and~\ref{alg:temporal-boundary-coverage-centrality}.
We describe the exact computation of TCC and TBCC in this section, and we also give the algorithms to approximate the TCC and TBCC values whose running time is polylogarithmic in the total number of vertices in $G$ (see Appendix B). 

In a directed network, we say that a vertex $v_t$ is reachable from $v_s$ if there is a directed path from $v_s$ to $v_t$.
With respect to Lemma~\ref{lem:DAG-representation}, to enumerate the number of pairs $(u, w)$ being covered by $\biv$ (at the boundary, if needed), we want to efficiently answer reachability in the DAG representation $\widehat{G}$ of given temporal network $G$.
To this end, it is beneficial to construct an index of $\widehat{G}$ that computes the reachability between any pair of nodes on the basis of information of the reachability between a fraction of node pairs.
Such an index is often called a reachability oracle in the database community~\cite{simon1988improved,cohen2003reachability,yildirim2010grail,van2011memory,Yano:2013fq}.

The basic idea of the construction of a reachability oracle for the present problem is the following.
Naively, we want to compute a large table that stores the reachability of every pair of temporal vertices.
If this were possible, we could answer reachability just by looking at that table.
Unfortunately, however, perfecting this table requires $O(|\widehat{V}|^2)$ computation time and $O(|\widehat{V}|^2)$ space, which could be prohibitively slow and large.
The reachability oracle overcomes this problem by carefully storing partial information of the network.
Based on the information, it efficiently computes the reachability for the whole network.

The method proposed in Ref.~\cite{Yano:2013fq}, which we will use for the numerical experiments in Section \ref{sec:result}, computes a small table for each temporal vertex that stores reachability from (and to) a smaller number of other certain temporal vertices than the number of all the temporal vertices. It depends on the structure of each temporal network how small the table becomes.
Then, we can answer the reachability from a temporal vertex $\biu$ to a temporal vertex $\biv$ by checking whether there is another temporal vertex $\biw$ such that we can confirm the reachability from $\biu$ to $\biw$ and from $\biw$ to $\biv$ using the small tables of $\biu$ and $\biv$.
If there is such $\biw$, we indeed have a directed path from $\biu$ to $\biv$.
The challenging part of the construction lies in guaranteeing the other direction; if there is a directed path from $\biu$ to $\biv$, then there is always such $\biw$.
In addition, we need to be able to compute the small table for each vertex efficiently.
This method resolves these issues, so that it can handle directed networks of millions of edges with the query time of less than a microsecond on average (see Ref.~\cite{Yano:2013fq} for further technical details).

\section{Results}\label{sec:result}


The basic statistics of the datasets we use are summarized in Table~\ref{tbl:dataset}.
It should be noted that we do not use the actual time stamps in the datasets but define $\tau$ by the order of unique values of the time stamps. For example, if the dataset consists of two time stamps $t = 1,4$, we translate them into $\tau = 1,2$.
In addition, we assume that $\lambda$ is equal to the finest time resolution of each dataset for all the temporal edges.
Although interactions in Irvine and Email are directed (i.e., from sender to receiver(s) of messages), we regard them as undirected.

\begin{table}
  \centering
  \caption{Basic statistics of the datasets. Variables $n$, $m$, $\widehat{n}$, and $\tau_{\max}$ are the total number of vertices and temporal edges in $G$, the total number of vertices in $\widehat{G}$, and the maximum ending time of a temporal edge, respectively. The datasets are arranged in increasing order of $m$.}
  \label{tbl:dataset}
  \begin{tabular}{|c|r|r|r|r|}
    \hline
    Name & $n$ & $m$ & $\widehat{n}$ & $\tau_{\max}$\\
    \hline
    \hline
    Infectious~\cite{Isella:2011} & 410 & 17298 & 32218 & 1393\\
    \hline
    HT09~\cite{Isella:2011} & 113 &  20187 & 48477 & 5246\\
    \hline
    Hospital~\cite{Vanhems:2013} & 75 & 32424 & 65296 & 9454\\
    \hline
    Irvine~\cite{Opsahl:2009} & 1899 & 59835 & 220772 & 58192\\
    \hline
    Email~\cite{Michalski:2011} & 167 & 82927 & 254533 & 57843\\
    \hline
  \end{tabular}
\end{table}

\subsection{Statistics of TCC and TBCC}

Figure~\ref{fig:plot-TCC-and-TBCC} depicts the rank plots of the TCC and TBCC values of temporal vertices in the decreasing order.
In all the datasets except for the Email data, at least $10\%$ of temporal vertices have TCC values larger than $0.1$ (Fig.~\ref{fig:plot-TCC-and-TBCC}(a)).
This fact implies the redundancy of temporal networks in the sense that, when information flows between temporal vertices,
it can drop by different vertices without increasing the total duration of the temporal paths.
However, there are a smaller number of temporal vertices with large TBCC values (Fig.~\ref{fig:plot-TCC-and-TBCC}(b)).
This fact also implies the redundancy of temporal networks in a different sense such that, when information flows between temporal vertices,
it is not forced to exist at a certain vertex at a certain time.

\begin{figure}[!t]
\centering
\includegraphics[width=0.49\hsize]{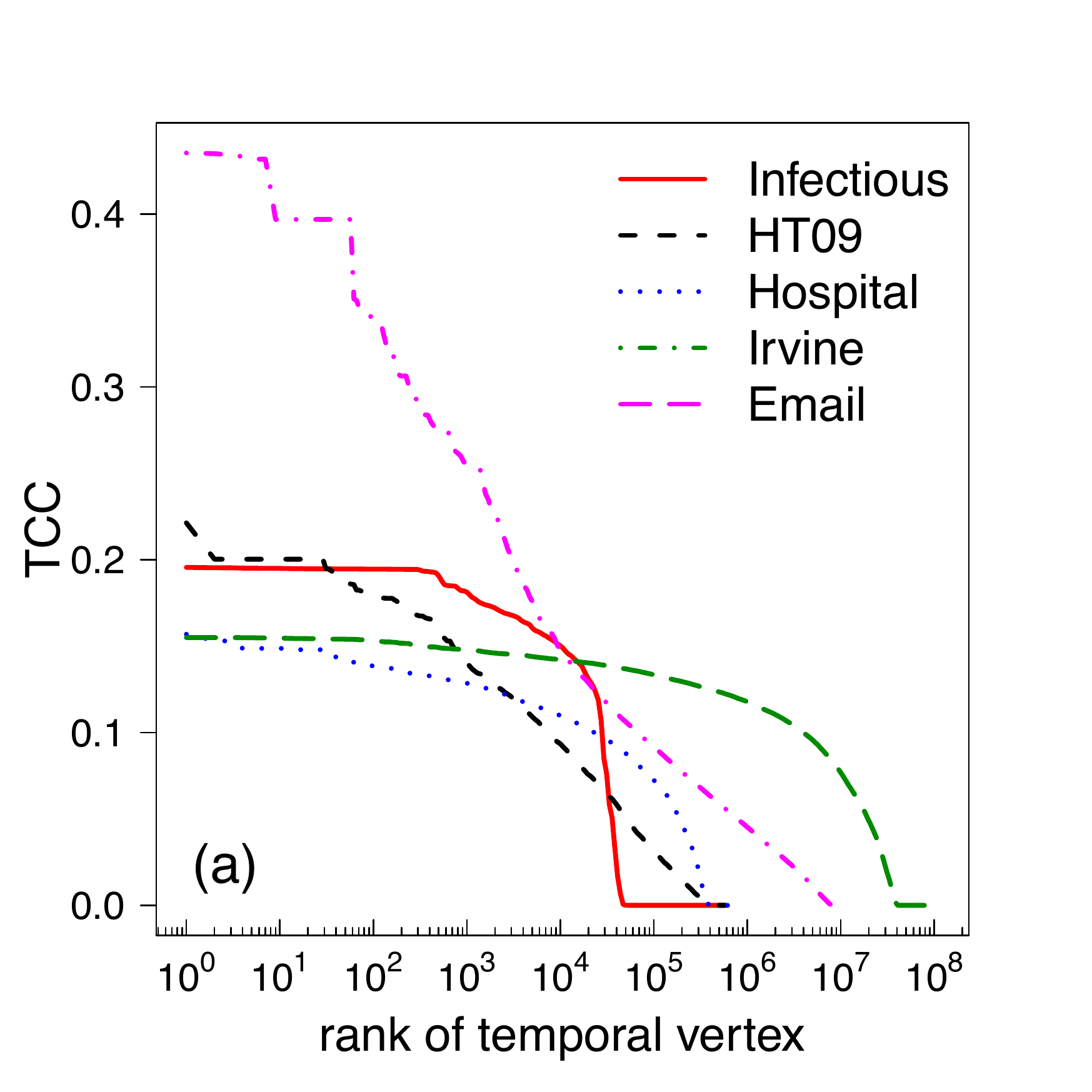}
\includegraphics[width=0.49\hsize]{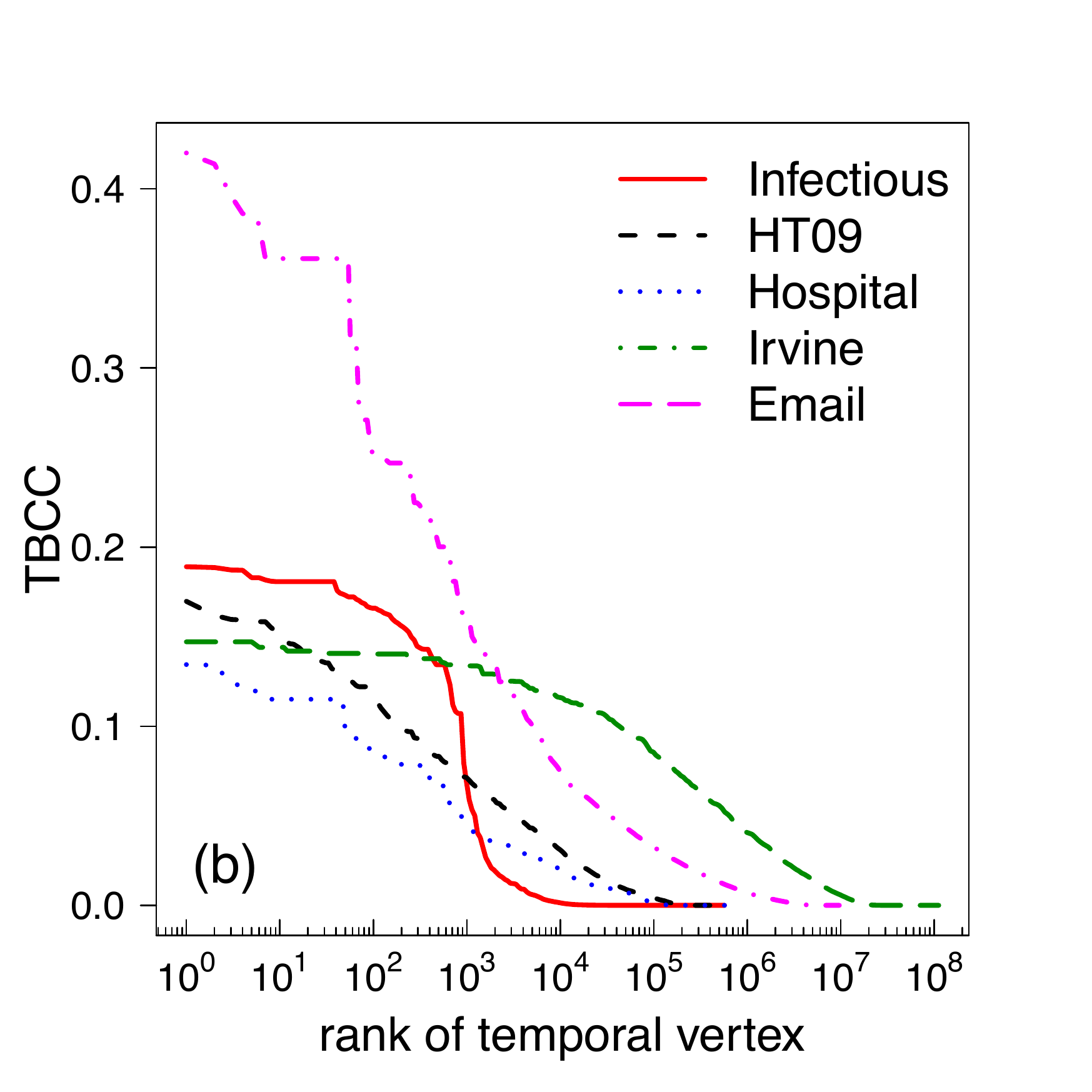}
\caption{Rank plots of the (a) TCC and (b) TBCC values.}
\label{fig:plot-TCC-and-TBCC}
\end{figure}

To see the impact of the structural peculiarity of temporal networks on these distributions, we computed the centrality values of temporal vertices in randomized temporal networks.
We randomize an original temporal network by replacing the two ends of each temporal edge by vertices chosen uniformly at random (similar to the procedure called randomized edges with randomly permuted times in Ref.~\cite{Holme:2012}).
The resultant centrality values are shown in Fig.~\ref{fig:plot-TCC-and-TBCC-RG}.
We notice that more temporal vertices have sufficiently large centrality values (e.g., larger than $0.1$) in real-world temporal networks (Fig.~\ref{fig:plot-TCC-and-TBCC}) than in randomized temporal networks (Fig.~\ref{fig:plot-TCC-and-TBCC-RG}).
The maximum centrality values are larger in the randomized than in the original networks for HT09 and Hospital, and vice versa for Infectious and Email.
This fact implies that the way the flow concentrates upon temporal vertices depends on each dataset.

It should be noted that the calculation for the randomized Irvine dataset did not stop even though the Email dataset, which has larger $\hat{n}$ than the Irvine, stopped. We can explain this result with the increase in the number of vertex pairs connected via temporal paths.
The dominant factor of the computational time is the number of vertex connected via temporal paths because we have to consider all of such vertex pairs to calculate the centrality value of a temporal vertex. After the randomization, most of the vertex pairs are likely to have temporal paths and the number of such pairs scales with $n^2$. If we take into account that the Irvine dataset has the largest $n$ value among the five datasets we consider, it makes sense for the Irvine dataset to require the far longer computational time compared to the other four datasets. 

\begin{figure}[!t]
\centering
\includegraphics[width=0.49\hsize]{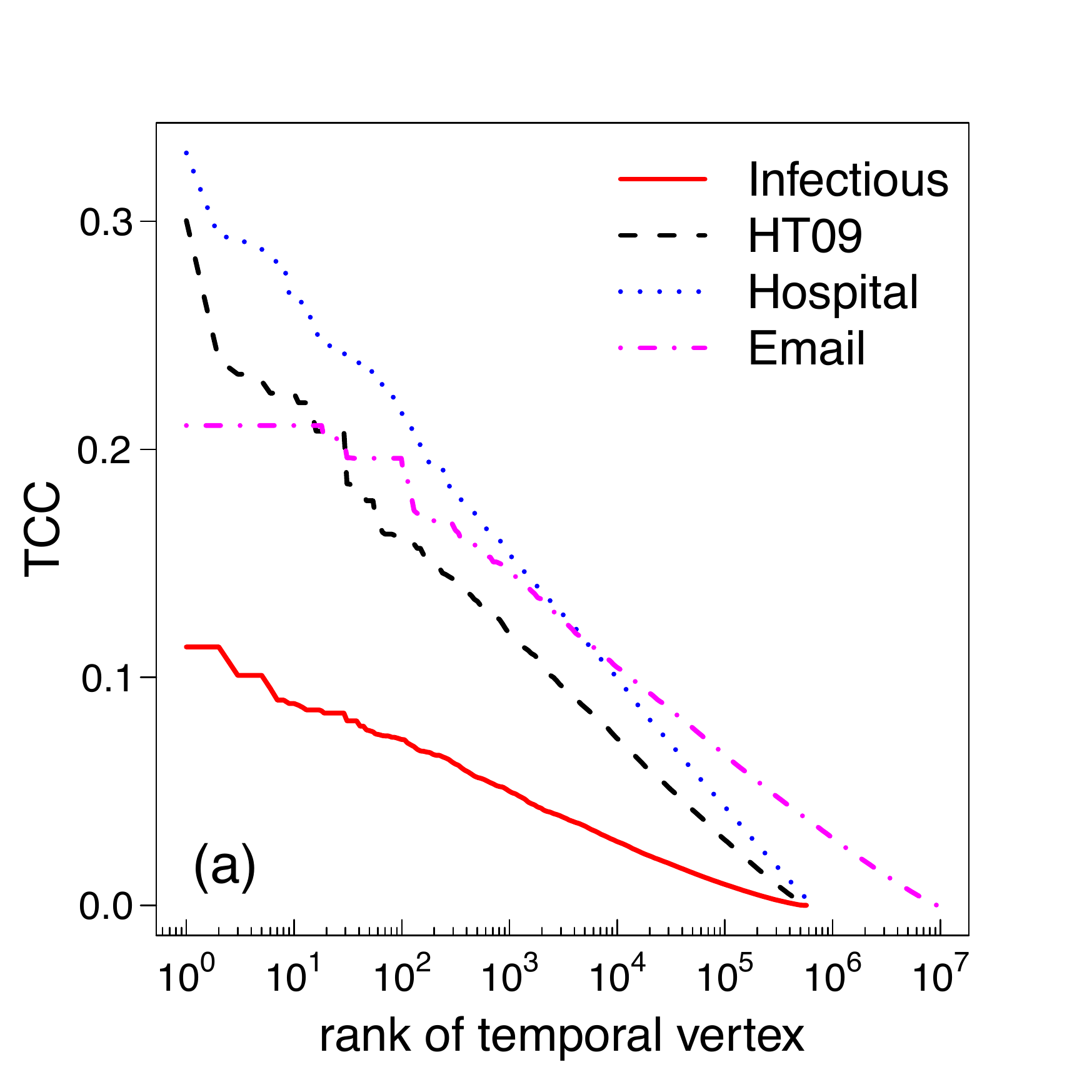}
\includegraphics[width=0.49\hsize]{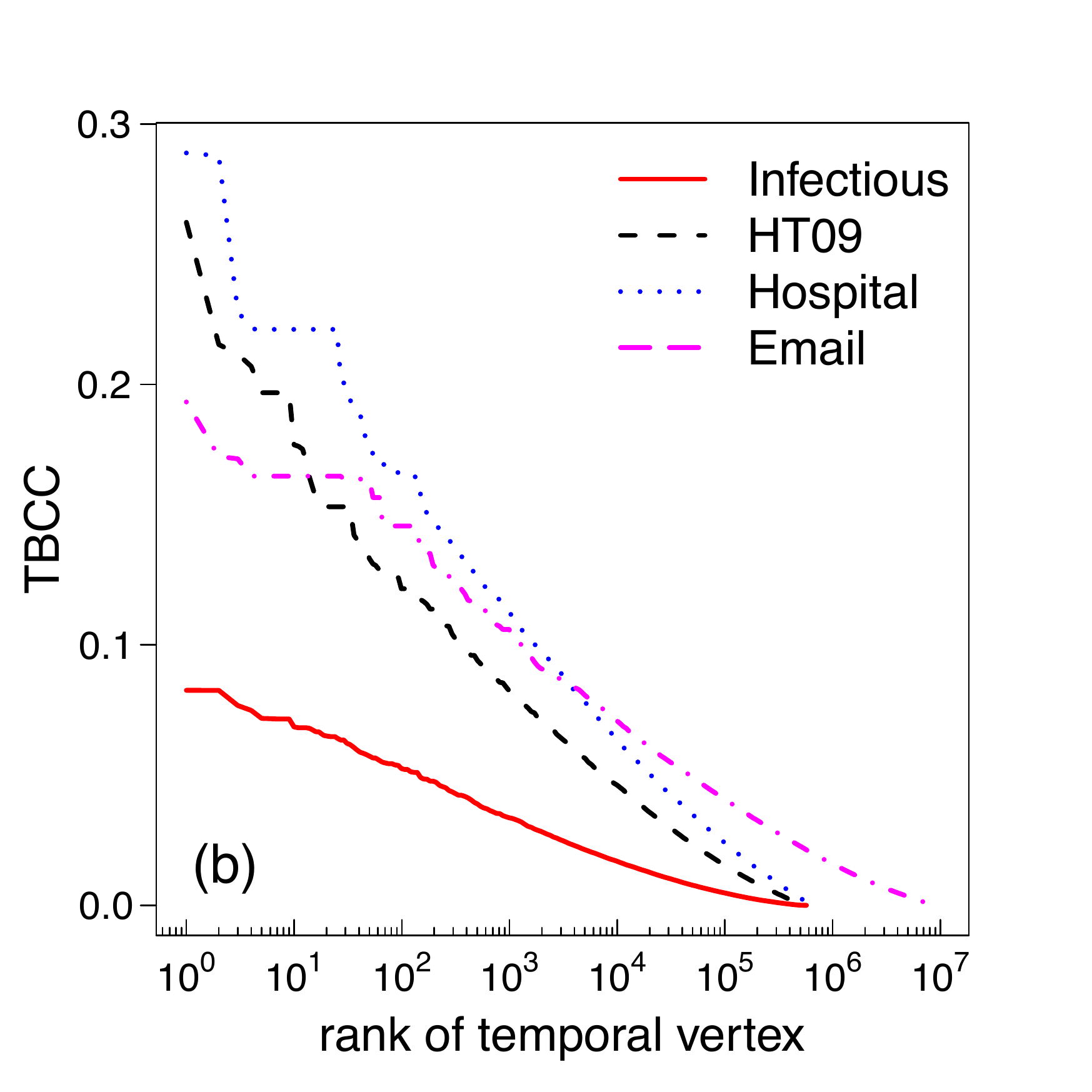}
\caption{Rank plots of the (a) TCC and (b) TBCC values in randomized temporal networks.
The curves for Irvine are not provided because the computation did not stop.}
\label{fig:plot-TCC-and-TBCC-RG}
\end{figure}

Next, we examine how the centrality values change over time owing to the structural transformation of the temporal networks.
Figure~\ref{fig:transition-of-max-centrality} depicts the change in the maximum TCC and TBCC values over temporal vertices at present and the number of temporal vertices at present for Infectious and Hospital.
In both datasets shown in Fig.~\ref{fig:transition-of-max-centrality}, we can see some periodic patterns in the number of temporal vertices.
However, the maximum centrality values are not much affected by the patterns, which implies that these values are determined not by the mere activity level in the networks but by the structure of the temporal network.
In addition, the fact that the maximum centrality values vary considerably throughout the observation periods suggests that we should carefully incorporate temporal structure to assess the importance of vertices.
Generally, the maximum TCC values are larger than the maximum TBCC values, which makes sense according to their definitions (i.e., TBCC only counts the coverage of the temporal paths at the boundary but TCC does not impose this boundary criterion).

\begin{figure*}[!t]
\centering
\includegraphics[width=0.48\hsize]{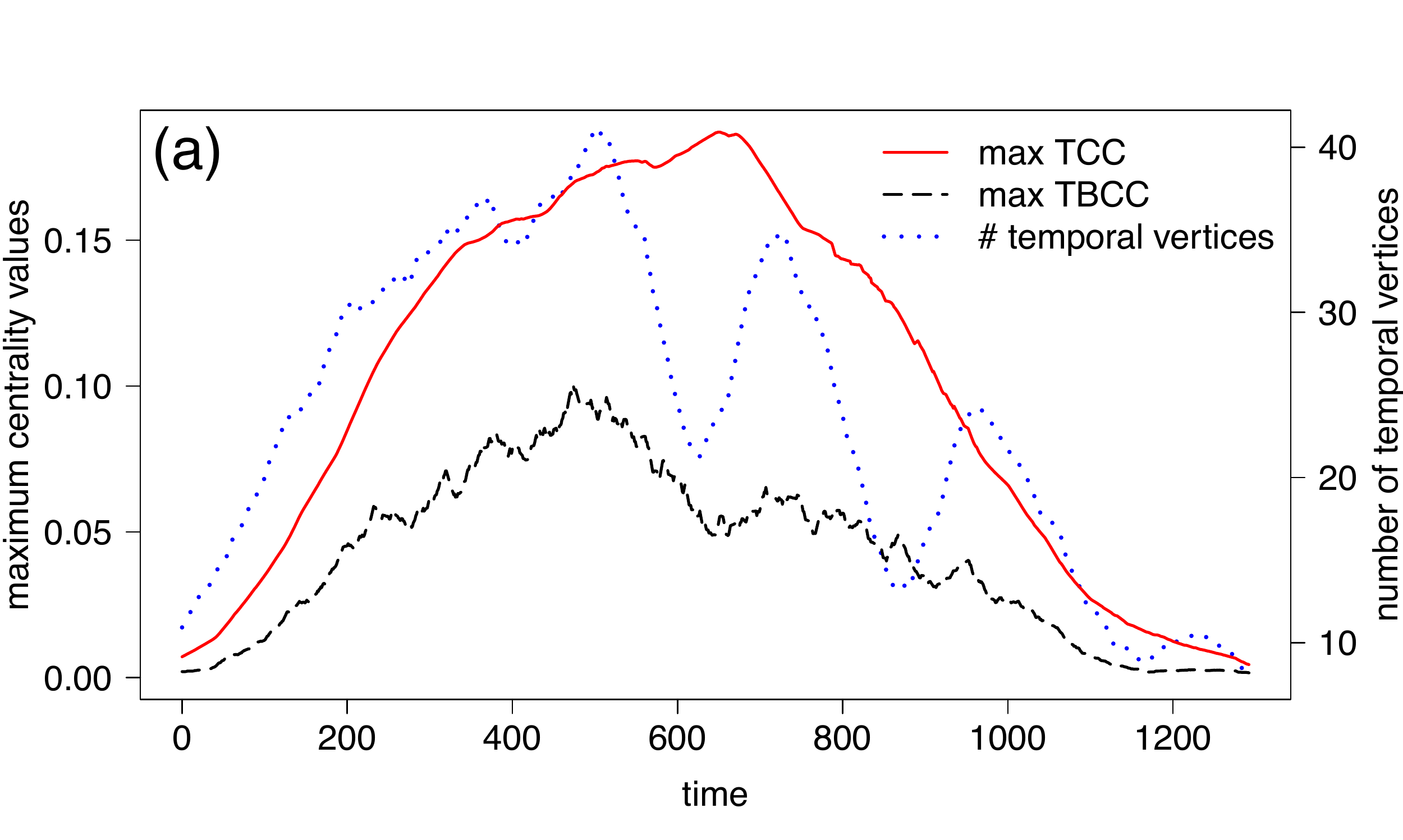}
\includegraphics[width=0.48\hsize]{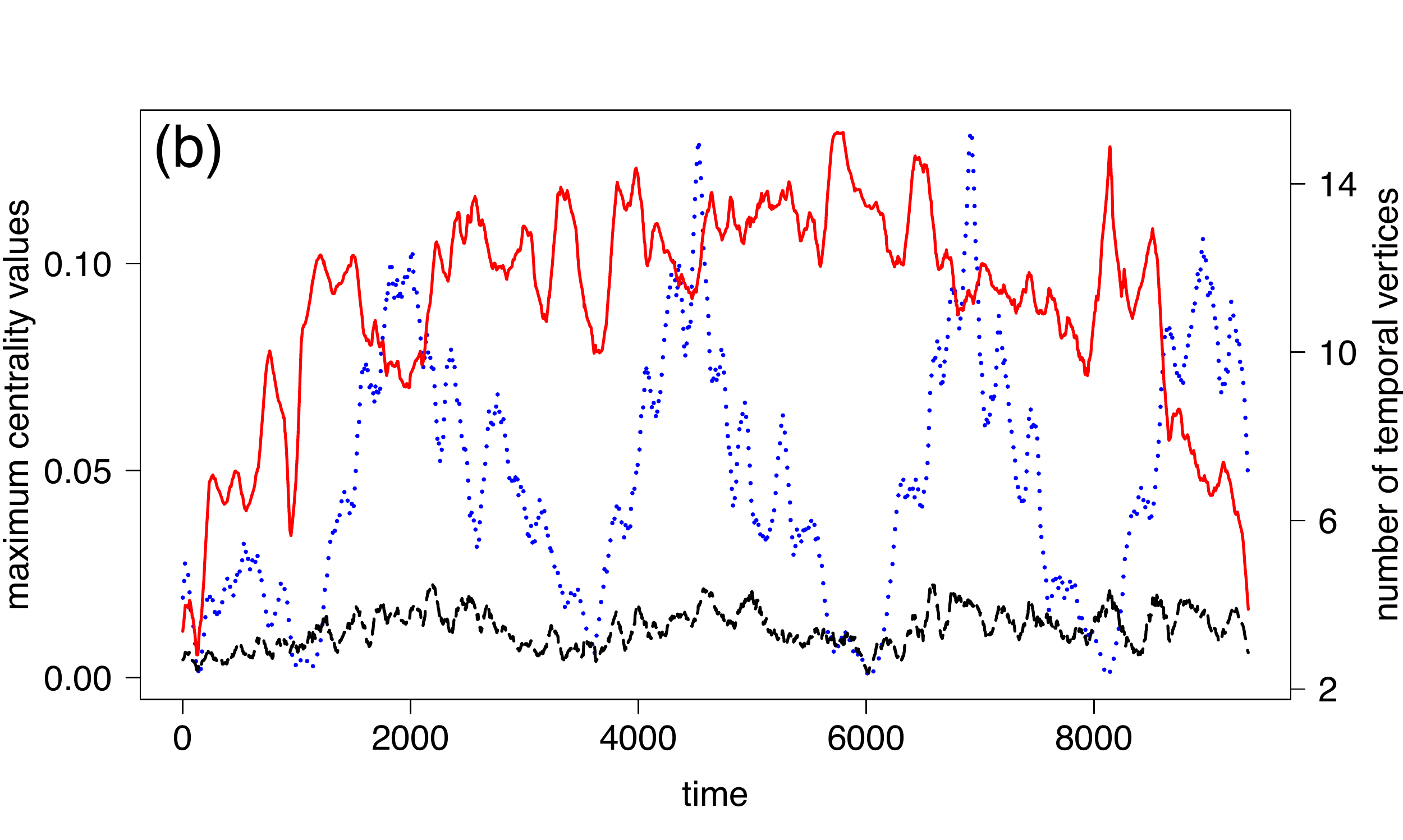}
\caption{Change in the maximum TCC and TBCC values over temporal vertices at present in (a) Infectious and (b) Hospital.
For readability, we smoothed the curves by taking the average over a sliding window with a length of $100$ units of time.
}
\label{fig:transition-of-max-centrality}
\end{figure*}

When we focus on a particular vertex, two centrality values of it also vary in a different manner over time.
Figure~\ref{fig:transition-of-single-vertex} depicts the change in the TCC and TBCC values of the vertex that are involved in the largest number of temporal edges in the two datasets, Infectious and Hospital.
The TCC value of the vertex increases with time in Infectious (Fig.~\ref{fig:transition-of-single-vertex}(a)), simply because the number of present temporal vertices increases and thus the focal vertex can reach these vertices in this period (also see Fig.~\ref{fig:transition-of-max-centrality}(a)).
By contrast, the TBCC value does not exhibit such an increasing trend. This fact supports our original purpose of introducing TBCC, i.e., to discount the centrality values of the temporal vertices of the dispensable temporal paths.
In addition, the plot of TBCC unveils that even the vertex with the largest number of temporal edges does not always bridge effective temporal paths.
In Hospital (Fig.~\ref{fig:transition-of-single-vertex}(b)), we can observe that the temporal edges associated with the focal vertex are partitioned into five time intervals, in each of which temporal edges occur in a bursty manner, and the centrality values of the vertex become larger at the beginning and the end of each of these time intervals.
This observation makes sense because, at the endpoints of a time interval, a vertex tends to play the role as the gateway for information flowing into or out of the time interval.

\begin{figure*}[!t]
\centering
\includegraphics[width=0.48\hsize]{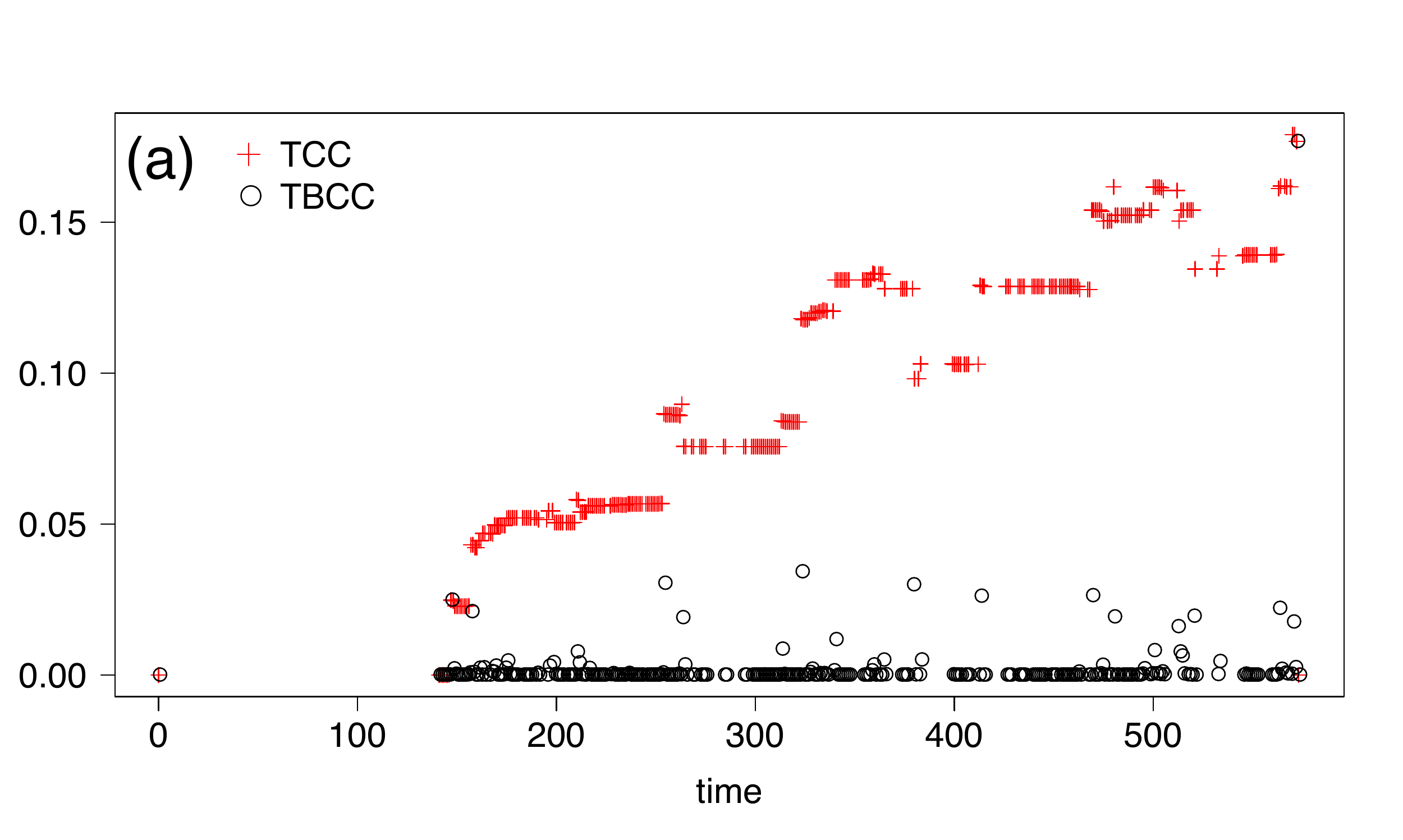}
\includegraphics[width=0.48\hsize]{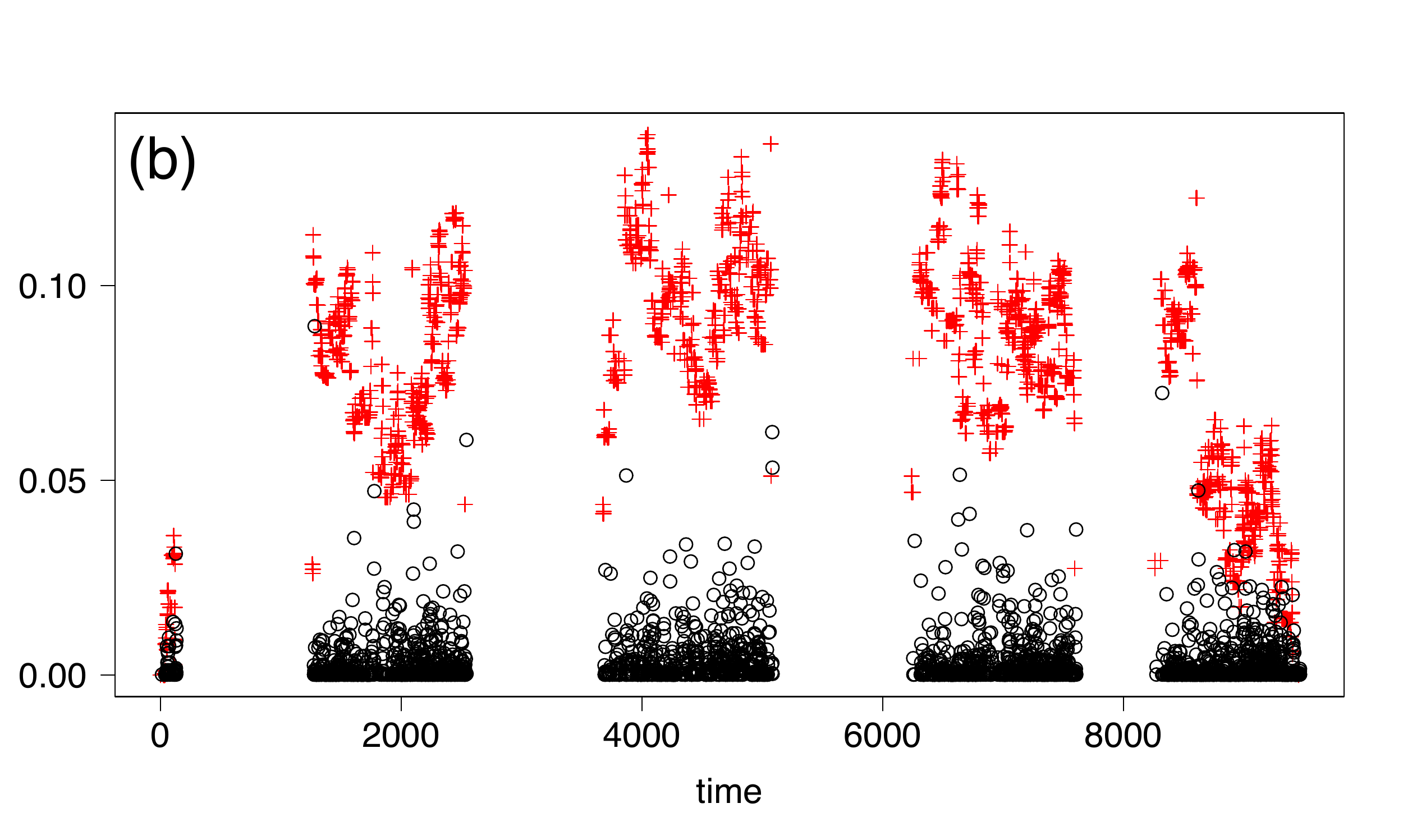}
\caption{Change in the TCC and TBCC values of the vertex with the largest number of temporal edges. (a) Vertex with label~$195$ in Infectious and (b) vertex with label $1115$ in Hospital.}
\label{fig:transition-of-single-vertex}
\end{figure*}

The computational efficiency of the two centralities enables us to draw a map of the centrality values of all the temporal vertices over time.
This map reveals the existence of bottleneck time regions in the empirical temporal networks.
Figures~\ref{fig:heatmap-TCC}(a) and \ref{fig:heatmap-TCC}(b) depict the TCC values of temporal vertices as a heat map for Infectious and Hospital, respectively.
In both datasets, most temporal vertices have non-negligible TCC values, and these results support the notion of redundancy of temporal networks (see Fig.~\ref{fig:plot-TCC-and-TBCC}(a)) such that all the vertices can belong to redundant temporal paths.
In addition, the temporal vertices with the largest centrality values appear in the middle of the observation period (around time $700$ and $6000$ in Figs.~\ref{fig:heatmap-TCC}(a) and (b), respectively), and the temporal vertices at the same time tend to have similar TCC values.
We found the same phenomenon in all the datasets (see Electronic Supplementary Materials for the plots of the other datasets),
and the existence of this bottleneck time period seems to be a common property of empirical temporal networks.

If we are interested in when these bottleneck time periods begin and end, we can look at the heat map of the TBCC values. 
As an example, Fig.~\ref{fig:heatmap-TCC}(c) magnifies a bottleneck time period in Infectious (Fig.~\ref{fig:heatmap-TCC}(a)) in which we observe many temporal vertices with the largest TCC values. However, the boundary of the bottleneck period is not clear in the figure.
Figure~\ref{fig:heatmap-TCC}(d) shows the heat map of the TBCC values in the same area as shown in Fig.~\ref{fig:heatmap-TCC}(c).
As we observe, the TBCC values indicate the boundaries at $\tau \simeq 660$, $680$, and $750$. 
This boundary information should be meaningful, for example, when we narrow the candidates of the vertices to be vaccinated for epidemic spreading on temporal networks~\cite{Lee:2012,Starnini:2013,Masuda:2013}.

We finally stress again that it becomes possible to compute these statistics and analyze the structure of temporal networks in such detail because of the efficient computation of TCC and TBCC using the reachability oracle.

\begin{figure*}
\centering
\includegraphics[width=0.48\hsize]{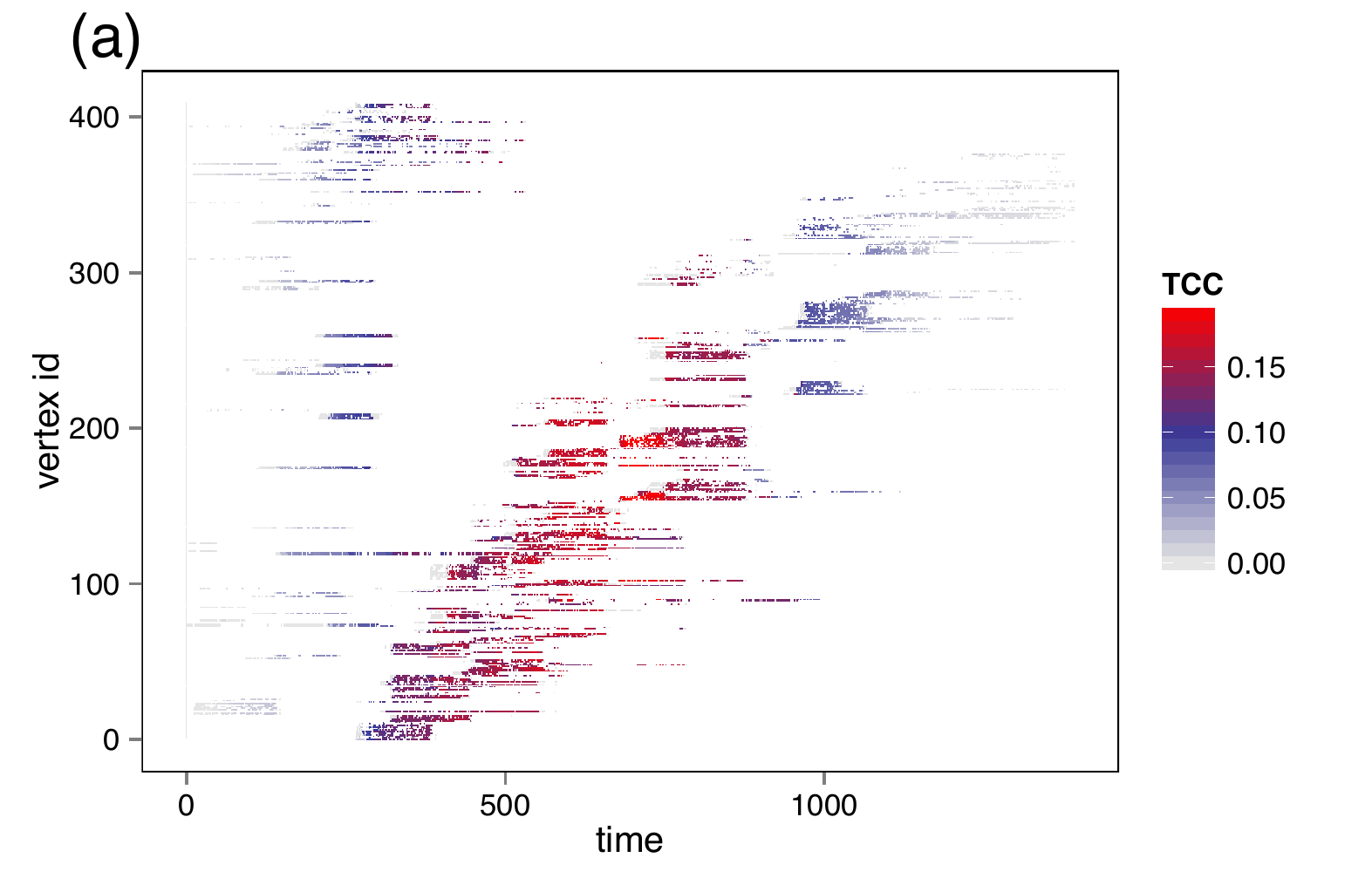}
\includegraphics[width=0.48\hsize]{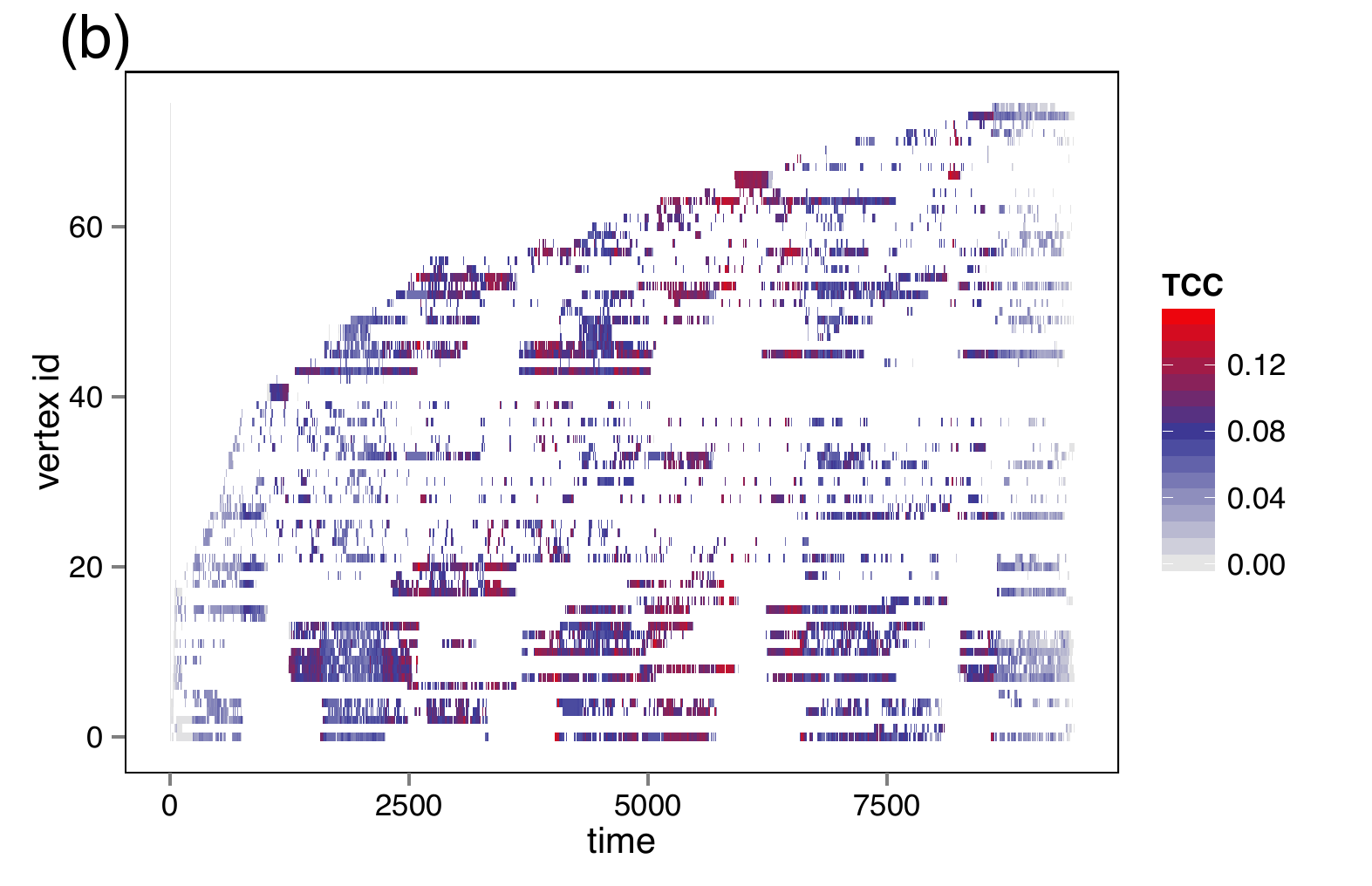}
\includegraphics[width=0.48\hsize]{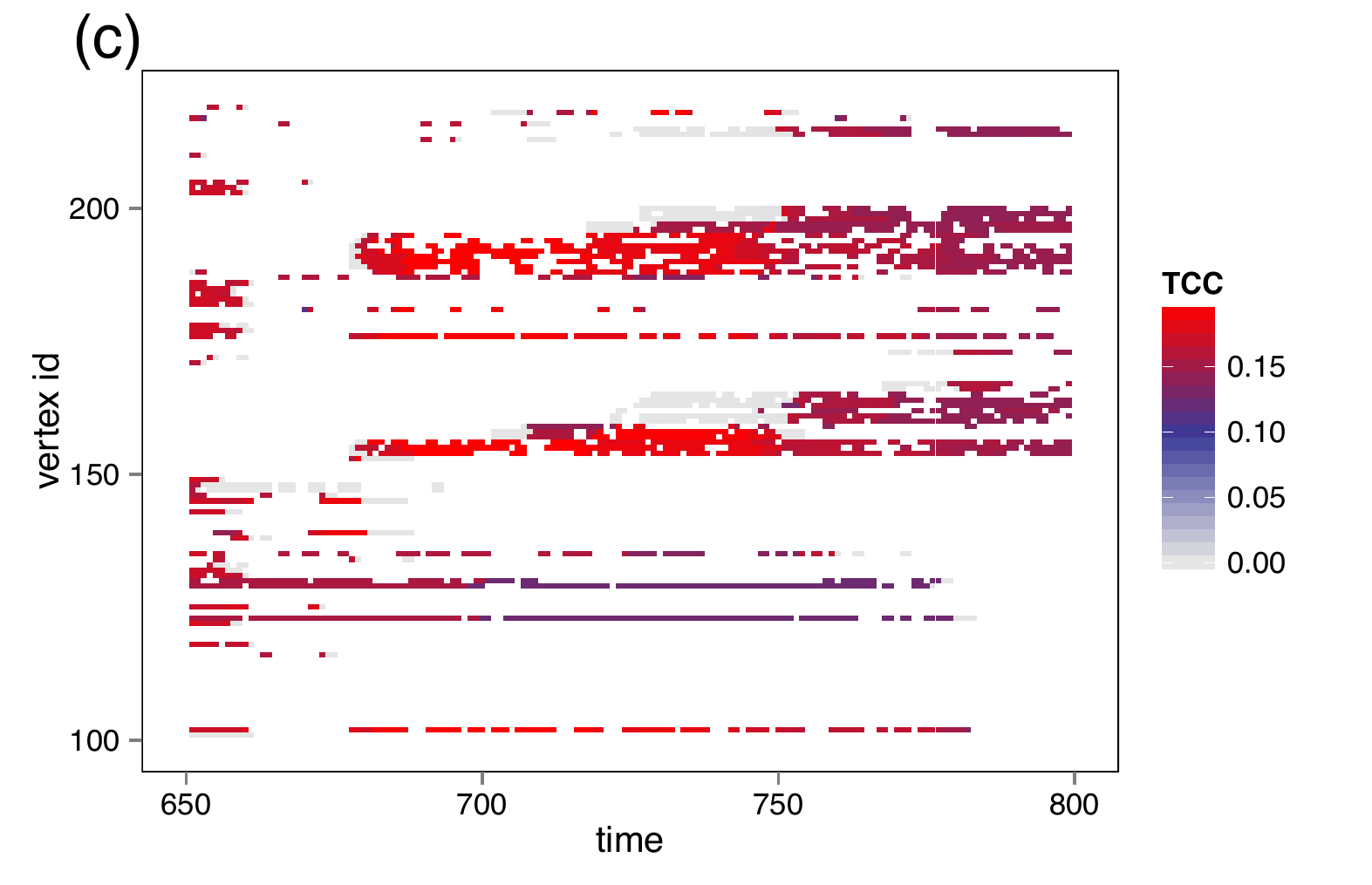}
\includegraphics[width=0.48\hsize]{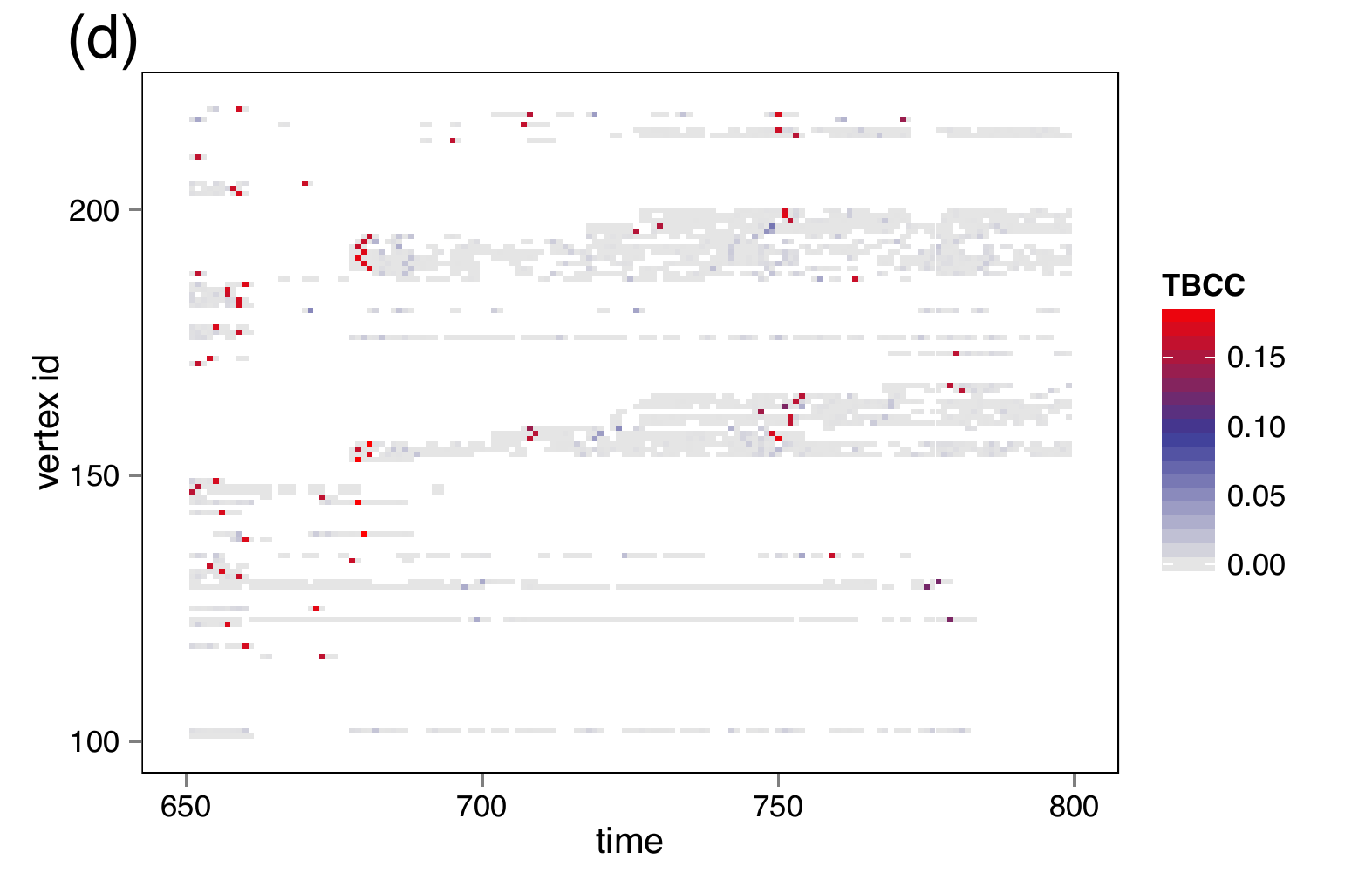}
\caption{Heat maps of the TCC values for (a) Infectious and (b) Hospital.
(c) Heat map magnifying the area with $650 \leq \tau \leq 800$ and $100 \leq {\rm ID} \leq 220$ in (a).
(d) Heat map of the TBCC values in the same area as shown in (c).}
\label{fig:heatmap-TCC}
\end{figure*}

\subsection{Delay caused by removing a central temporal vertex}

In closing this section, to verify the relevance of the proposed centrality notions at the microscopic level, we briefly report that removing a temporal vertex with large TCC and TBCC values is effective in delaying the propagation of information.

Let $G = (V,E)$ be a temporal network, where $V = \{v_1,v_2,\ldots,v_n\}$.
For a temporal vertex $\biv = (v,\tau)$,
let $\biv_i = (v_i, \eat(\biv,v_i))$ for each $i \in [n]$ and $\tau'$ be the (unique) time such that $\biv$ has an edge to $\biv' = (v,\tau')$.
We say that $\biv_i$ gets prolonged by removing $\biv$ if $\eat(\biv, v_i)$ becomes larger by removing edges incident to $\biv$ (and we keep edge $(\biv,\biv')$).
In a similar manner, we say that $\biv_i$ becomes disconnected by removing $\biv$ if we cannot reach $\biv_i$ from $\biv$ after removing edges incident to $\biv$ (where, again, we keep edge $(\biv,\biv')$).

We investigate the fraction of prolonged or disconnected temporal vertices among $\biv_1,\biv_2,\ldots,\biv_n$, by removing one of the top $100$ vertices with respect to the TCC or TBCC values.
It should be noted that the fraction of temporal vertices becoming prolonged or disconnected is nontrivial because the definition of TCC and TBCC take into account temporal paths both before and after the focal temporal vertex. As a baseline for comparison, we also conduct the same test by removing a temporal vertex chosen randomly.
For the random case, we randomly choose $100$ temporal vertices without replacement and take the average of the fraction of prolonged or disconnected temporal vertices for these $100$ trials.

The results of the removal test of temporal vertices are summarized in Table~\ref{tbl:delay} for the five datasets.
As we expected, the removals according to the largest centrality values make more temporal vertices prolonged or disconnected than the random removals.
The removals according to the largest TCC values tend to prolong a certain fraction of temporal vertices for all the datasets considered. However, it makes few temporal vertices disconnected. These outcomes make sense because the number of other temporal paths running alongside the temporal path going through the focal temporal vertex is not considered in TCC (also see Section~\ref{sec:temporal-coverage-centrality}).
By contrast, the removals according to the largest TBCC values make a considerable fraction of temporal vertices prolonged and disconnected.
Remarkably, $50.8\%$ of the temporal vertices, on average, become disconnected from a removed temporal vertex in Irvine.
There is no clear distinction between the results of the offline (i.e., Infectious, HT09, and Hospital) and online (i.e., Irvine and Email) networks.   

\begin{table*}[t]
  \centering
  \caption{Results of the removal of temporal vertices. The number in each cell presents the average fraction of disconnected (or prolonged) temporal vertices over the $100$ trials of the removal based on the given procedure (i.e., according to the largest TCC and TBCC values or random pick).}
  \label{tbl:delay}
  \begin{tabular}{|c|cc|cc|cc|}
  \hline
  \multirow{2}{*}{Dataset} & \multicolumn{2}{c|}{TCC} & \multicolumn{2}{c|}{TBCC} & \multicolumn{2}{c|}{Random} \\
  & Prolonged & Disconnected & Prolonged & Disconnected & Prolonged & Disconnected \\
  \hline
  \hline
  Infectious & 0.013 & 0.001 & 0.014 & 0.232 & 0.010 & 0.001 \\
  HT09 & 0.082 & 0.001 & 0.264 & 0.069 & 0.031 & 0.007 \\
  Hospital & 0.049 & 0.001 & 0.156 & 0.257 & 0.037 & 0.001 \\
  Irvine & 0.014 & 0.003 & 0.006 & 0.508 & 0.018 & 0.012 \\
  Email & 0.136 & 0.006 & 0.375 & 0.016 & 0.054 & 0.000 \\
  \hline
  \end{tabular}
\end{table*}

\section{Conclusions}\label{sec:conclusion}
We introduced two centrality notions for temporal networks---temporal coverage centrality and temporal boundary coverage centrality---to represent the importance of a temporal vertex by the fraction of vertex pairs that can or should use the temporal vertex when sending information as quickly as possible.
Compared to centrality notions proposed in previous work, TCC and TBCC have two advantages: (i) Parameters or time windows do not need to be set and (ii) computation time is reasonable.

Applying TCC and TBCC to multiple datasets of empirical temporal networks, we revealed that there tends to be particular bottleneck time periods that play a crucial role in propagating information quickly and that the rest of the networks is redundant in the sense that there are many temporal paths to send information with the same duration. Although such structural redundancy in temporal networks was suggested in some previous studies~\cite{Trajanovski:2012,Takaguchi:2012,Scellano:2013}, our centrality notions enable us to clearly quantify and visualize this property.
We believe that the centrality notions we proposed are useful for further studying the structure of temporal networks and verifying generative models of temporal networks.

Datasets used in the numerical experiments, Infectious, HT09, and Hospital were originally collected and published by the SocioPatterns collaboration (\url{http://www.sociopatterns.org/}).
Datasets HT09 and Hospital were downloaded from the SocioPatterns website.  
Datasets Infectious, Irvine, and Email were downloaded from the Koblenz Network Collection (\url{http://konect.uni-koblenz.de/}).
The authors thank Dr.~James Cheng for valuable discussions. 
Yuichi Yoshida is supported by JSPS Grant-in-Aid for Young Scientists (B) (No.~26730009), MEXT Grant-in-Aid for Scientific Research on Innovative Areas (24106003), and JST, ERATO, Kawarabayashi Large Graph Project.
T.T., Y. Yano and Y. Yoshida designed the research. Y. Yoshida constructed the algorithms to compute the centralities and gave the proof of their computational complexity. Y. Yano implemented the algorithms. T.T. analyzed the data sets. Y. Yano performed the numerical experiments of the removal of temporal vertices. T.T., Y. Yano, and Y. Yoshida discussed all the results and wrote the manuscript.

\appendix
\section{Computational complexity of calculating $\eat$ and $\ldt$ with the reachability oracle}

With the aid of the reachability oracle,
we can efficiently compute $\eat$ and $\ldt$:
\begin{lemma}\label{lem:compute-eat-and-ldt}
  Let $G$ be a temporal network and $\widehat{G}$ be its DAG representation.
  We can compute $\eat$ and $\ldt$ with $O(\log |E|)$ queries to the reachability oracle of $\widehat{G}$.
\end{lemma}
\begin{proof}
  We only consider $\eat$ as $\ldt$ can be computed similarly.
  Given temporal vertex $\biv$ and vertex $w$,
  $\eat(\biv,w)$ is the minimum $\tau \in \bbR$ such that there is a temporal path from $\biv$ to $(w,\tau)$.
  To find such $\tau$, we perform a binary search using the reachability oracle.
  Since the number of possible values for $\tau$ is $O(|E|)$,
  the number of queries is $O(\log |E|)$.
\end{proof}
\begin{lemma}\label{lem:compute-TCC-and-TBCC}
  Let $G$ be a temporal network and $\widehat{G}$ be its DAG representation.
  For any temporal vertex $\biv$,
  we can compute the TCC and TBCC values of $\biv$ with $O(|V|^2 \log|E|)$ queries to the reachability oracle of $\widehat{G}$.
\end{lemma}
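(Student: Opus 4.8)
The plan is to bound the number of reachability-oracle queries made by Algorithms~\ref{alg:temporal-coverage-centrality} and~\ref{alg:temporal-boundary-coverage-centrality} directly, treating each evaluation of $\eat$ or $\ldt$ as a black box that costs $O(\log|E|)$ queries by Lemma~\ref{lem:compute-eat-and-ldt}. First I would observe that both algorithms share the same outer structure: a single loop over all ordered pairs $(u,w)\in V\times V$, of which there are exactly $|V|^2$. Hence it suffices to show that each iteration can be carried out with $O(\log|E|)$ queries; multiplying by the $|V|^2$ iterations then yields the claimed $O(|V|^2\log|E|)$ bound.

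Next I would account for the $\eat$/$\ldt$ evaluations inside a single iteration. To form $\biu=(u,\ldt(\biv,u))$ and $\biw=(w,\eat(\biv,w))$ we compute one $\ldt$ value and one $\eat$ value; note that these two values are precisely $\tau(\biu)$ and $\tau(\biw)$, so the comparisons below reuse them for free. The coverage test on line~\ref{line:coverage-check} then requires $\eat(\biu,w)$ and $\ldt(\biw,u)$, i.e.\ one more $\eat$ and one more $\ldt$ evaluation, each compared against the already-known $\tau(\biw)$ and $\tau(\biu)$. For TBCC, the additional boundary test on line~\ref{line:coverage-at-boundary-check} requires $\eat(\biu,v)$ and $\ldt(\biw,v)$, contributing one further $\eat$ and one further $\ldt$. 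In all, each iteration performs at most three $\eat$ and three $\ldt$ evaluations, a constant number; by Lemma~\ref{lem:compute-eat-and-ldt} each such evaluation uses $O(\log|E|)$ queries, so one iteration costs $O(\log|E|)$ queries and the total is $O(|V|^2\log|E|)$.

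Two bookkeeping points would round out the argument. First, every quantity passed to Lemma~\ref{lem:compute-eat-and-ldt} is of the required form---a temporal vertex paired with a vertex: $\biu$ and $\biw$ are temporal vertices appearing as the first argument of the relevant $\eat$/$\ldt$ calls, so the lemma applies verbatim. Second, I would dispatch the degenerate cases in which $\eat(\biv,w)=\infty$ or $\ldt(\biv,u)=-\infty$, where $\biw$ or $\biu$ fails to exist as a finite temporal vertex; in these cases the pair $(u,w)$ is simply declared not covered and the iteration short-circuits, which can only decrease the query count. I do not expect a genuine obstacle here: once Lemma~\ref{lem:compute-eat-and-ldt} is in hand, the statement is an immediate consequence of the fact that the defining algorithms loop over $|V|^2$ pairs and spend $O(1)$ oracle-based primitive evaluations per pair. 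The only care needed is in confirming that the extra boundary test of TBCC adds merely a constant number of further evaluations rather than changing the asymptotics.
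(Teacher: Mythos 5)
Your proposal is correct and is essentially the paper's own argument spelled out in detail: the paper's proof simply states that the bound is immediate from Lemma~\ref{lem:compute-eat-and-ldt} together with Algorithms~\ref{alg:temporal-coverage-centrality} and~\ref{alg:temporal-boundary-coverage-centrality}, which is exactly your accounting of $|V|^2$ iterations with a constant number of $\eat$/$\ldt$ evaluations (each costing $O(\log|E|)$ oracle queries) per iteration.
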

\begin{proof}
  The proof is immediate from Lemma~\ref{lem:compute-eat-and-ldt} and the algorithm definitions of TCC (Algorithm~\ref{alg:temporal-coverage-centrality}) and TBCC (Algorithm~\ref{alg:temporal-boundary-coverage-centrality}).
\end{proof}

\section{Approximate computation of temporal coverage centralities}
By Lemma~\ref{lem:compute-TCC-and-TBCC} (see Section \ref{sec:compute}),
the number of queries to the reachability oracle for computing the TCC and TBCC values is (almost) quadratic in the number of vertices of a temporal network.
However, in some applications, we may want to compute these centralities faster.
Here, we introduce a standard technique that enables us to approximate these centrality values with a sublinear number of queries.
We only explain the case of TCC; the case of TBCC is performed in a similar way.

Algorithm~\ref{alg:approximate-temporal-coverage-centrality} is an approximate method for computing the centrality value.
The difference from Algorithm~\ref{alg:temporal-coverage-centrality} is that,
instead of enumerating all pairs $(u,w)$,
we only sample $O(1/\epsilon^2)$ pairs of vertices and take the average over them, where $\epsilon$ is the parameter controlling the possible error in approximation.
\begin{algorithm}[!t]
  \caption{(Approximation to the TCC value of $\biv$)}
  \label{alg:approximate-temporal-coverage-centrality}
  \begin{algorithmic}[1]
    \STATE $r \leftarrow 0$.
    \FOR{$i = 1$ to $k := \frac{1}{2\epsilon^2}\log (2|V|^2)$}
      \STATE Sample vertices $u,w \in V$ uniformly.
      \STATE $\biu \leftarrow (u,\ldt(\biv,u))$.
      \STATE $\biw \leftarrow (w,\eat(\biv,w))$.
      \IF{$\eat(\biu, w) = \biw$ and $\ldt(\biw,u) = \biu$}
        \STATE $r \leftarrow r + 1$. \label{line:increment}
      \ENDIF
    \ENDFOR
    \textbf{return} $r / k$.
  \end{algorithmic}
\end{algorithm}

To show that Algorithm~\ref{alg:approximate-temporal-coverage-centrality} gives a good approximation, we need to recall Hoeffding's inequality:
\begin{lemma}[Hoeffding's inequality \cite{Hoeffding:1963}]\label{lem:hoeffding}

\noindent
  Let $X_1,X_2,\ldots,X_k$ be independent random variables in $[0,1]$ and $\overline{X} = (1/k)\sum_{i=1}^k X_i$.
  Then, for any positive real number $t$,
  \[
    \Pr[| \overline{X} - \E[\overline{X}] | \geq t] \leq 2\exp(-2t^2 k).
  \]
\end{lemma}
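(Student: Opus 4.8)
The plan is to prove Hoeffding's inequality by the standard exponential moment (Chernoff) method: first establish a one-sided deviation bound, then double it by a symmetry argument. Writing $S = \sum_{i=1}^k (X_i - \E[X_i])$, note that the event $\{\overline{X} - \E[\overline{X}] \geq t\}$ is exactly $\{S \geq kt\}$. For any $s > 0$, applying Markov's inequality to the nonnegative random variable $e^{sS}$ gives
\[
\Pr[S \geq kt] \leq e^{-skt}\, \E[e^{sS}] = e^{-skt}\prod_{i=1}^k \E\bigl[e^{s(X_i - \E[X_i])}\bigr],
\]
where the factorization into a product uses the independence of the $X_i$.

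The crux is to control each factor. I would prove the auxiliary bound (Hoeffding's lemma): if $Y$ is a random variable with $\E[Y] = 0$ and $a \leq Y \leq b$ almost surely, then $\E[e^{sY}] \leq \exp\bigl(s^2(b-a)^2/8\bigr)$. Applying this with $Y = X_i - \E[X_i]$, which lies in an interval of length $1$ since $X_i \in [0,1]$, yields $\E[e^{s(X_i - \E[X_i])}] \leq \exp(s^2/8)$. Substituting into the product bound gives $\Pr[S \geq kt] \leq \exp(-skt + ks^2/8)$. Minimizing the exponent over $s > 0$ (the optimum is $s = 4t$) produces $\Pr[S \geq kt] \leq \exp(-2kt^2)$, that is, $\Pr[\overline{X} - \E[\overline{X}] \geq t] \leq \exp(-2t^2 k)$. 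Replacing each $X_i$ by $1 - X_i$ (which is again in $[0,1]$) gives the matching lower-tail bound, and a union bound over the two tails supplies the factor $2$ in the statement.

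The main obstacle is Hoeffding's lemma itself. I would prove it by studying $\psi(s) = \log \E[e^{sY}]$, the cumulant generating function, which satisfies $\psi(0) = 0$ and $\psi'(0) = \E[Y] = 0$. A direct computation identifies $\psi''(s)$ as the variance of $Y$ under the exponentially tilted probability measure whose density is proportional to $e^{sY}$; since this tilted random variable is still supported in $[a,b]$, Popoviciu's inequality bounds its variance by $(b-a)^2/4$, so $\psi''(s) \leq (b-a)^2/4$ for all $s$. A second-order Taylor expansion with remainder then gives $\psi(s) = \tfrac{s^2}{2}\psi''(\xi) \leq s^2(b-a)^2/8$ for some $\xi$ between $0$ and $s$, which is exactly the claim after exponentiating. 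The remaining pieces---the Chernoff bound, the independence factorization, and the optimization over $s$---are routine once this lemma is in hand.
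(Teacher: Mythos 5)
Your proof is correct, but note that the paper itself does not prove this lemma at all: it is imported as a classical black box, stated with a citation to Hoeffding's 1963 paper, and then used only once (in the proof of the approximation guarantee for Algorithm~3). What you have supplied is the standard modern proof of that cited result: the Chernoff exponential-moment bound combined with independence to factorize $\E[e^{sS}]$, Hoeffding's lemma $\E[e^{sY}] \leq \exp\bigl(s^2(b-a)^2/8\bigr)$ for mean-zero $Y \in [a,b]$ to control each factor, and the optimization $s = 4t$, whose exponent $-4kt^2 + 2kt^2 = -2kt^2$ indeed yields the one-sided bound $\exp(-2t^2k)$; the reflection $X_i \mapsto 1 - X_i$ plus a union bound correctly produces the two-sided statement with the factor $2$. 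Your proof of Hoeffding's lemma via the cumulant generating function---$\psi(0) = \psi'(0) = 0$, $\psi''(s)$ equal to the variance under the exponentially tilted measure, bounded by $(b-a)^2/4$ via Popoviciu since tilting preserves the support, then Taylor with Lagrange remainder---is also sound (boundedness of $Y$ justifies the differentiations under the expectation). So there is no gap; the only difference from the paper is that the paper delegates all of this to the literature, which is the customary and entirely adequate treatment for a result of this vintage, whereas your argument makes the paper's probabilistic toolkit self-contained.
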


\begin{lemma}\label{lem:approximate-TCC-and-TBCC}
  Let $G$ be a temporal network and $\widehat{G}$ be its DAG representation.
  For any temporal vertex $\biv$,
  with $O(\log^2 |V| / \epsilon^2)$ queries to the reachability oracle of $\widehat{G}$,
  we can compute the TCC value of $\biv$ with additive error of $\epsilon$ with probability of at least $1 - 1/|V|^2$.
\end{lemma}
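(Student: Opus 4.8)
The plan is to read Algorithm~\ref{alg:approximate-temporal-coverage-centrality} as a Monte Carlo estimator of the TCC value and to bound its deviation with Hoeffding's inequality (Lemma~\ref{lem:hoeffding}); the statement then splits cleanly into a concentration part and a query-counting part.

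First I would fix the focal temporal vertex $\biv$, write $p$ for its TCC value (the fraction of pairs in $V \times V$ covered by $\biv$), and introduce, for each iteration $i \in [k]$, the indicator $X_i \in \{0,1\}$ that the coverage test preceding line~\ref{line:increment} succeeds on the $i$-th sampled pair. Interpreting ``sample $u,w \in V$ uniformly'' as independent uniform draws, the $X_i$ are i.i.d.\ variables in $[0,1]$, and since a uniformly random pair is covered by $\biv$ with probability exactly $p$, we have $\E[X_i] = p$, hence $\E[\overline{X}] = p$ for the returned average $\overline{X} = r/k = (1/k)\sum_{i=1}^k X_i$.

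Next I would apply Lemma~\ref{lem:hoeffding} with $t = \epsilon$ to get
\[
  \Pr\bigl[\,|\overline{X} - p| \ge \epsilon\,\bigr] \le 2\exp(-2\epsilon^2 k),
\]
and then substitute the algorithm's choice $k = \tfrac{1}{2\epsilon^2}\log(2|V|^2)$, so that the exponent collapses to $-\log(2|V|^2)$ and the right-hand side becomes $2\exp(-\log(2|V|^2)) = 1/|V|^2$. This is exactly the claimed guarantee: the estimate lies within additive error $\epsilon$ of the true TCC value except with probability at most $1/|V|^2$.

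Finally, for the query complexity I would multiply the number of iterations by the cost per iteration. The loop runs $k = O(\log|V|/\epsilon^2)$ times, and each iteration computes $\ldt(\biv,u)$, $\eat(\biv,w)$, $\eat(\biu,w)$, and $\ldt(\biw,u)$, each of which by Lemma~\ref{lem:compute-eat-and-ldt} uses $O(\log|E|)$ oracle queries; the total is $O(\log|V|\log|E|/\epsilon^2)$. The case of TBCC is identical, since only the coverage-at-boundary test (again a constant number of $\eat$/$\ldt$ evaluations) replaces the coverage test. The one point needing care is matching this to the stated $O(\log^2|V|/\epsilon^2)$: this requires $\log|E| = O(\log|V|)$, i.e.\ that $|E|$ is polynomially bounded in $|V|$, which I would state explicitly as the operating regime. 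Beyond that, the only modeling subtlety is the independence of the draws underlying the i.i.d.\ assumption; no genuinely hard step arises, so the argument is essentially a direct application of Hoeffding once the estimator is shown to be unbiased.
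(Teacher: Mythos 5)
Your proposal follows essentially the same route as the paper's own proof: treat each sampled pair as an i.i.d.\ indicator variable whose mean is the TCC value, apply Hoeffding's inequality (Lemma~\ref{lem:hoeffding}) with the algorithm's choice $k = \frac{1}{2\epsilon^2}\log(2|V|^2)$ to get the $1/|V|^2$ failure probability, and multiply the iteration count by the per-iteration query cost. You are in fact slightly more careful than the paper on the last step: the paper's proof silently cites Lemma~\ref{lem:compute-eat-and-ldt} as giving $O(\log|V|)$ queries even though that lemma states $O(\log|E|)$, whereas you correctly flag that the claimed $O(\log^2|V|/\epsilon^2)$ bound needs the (reasonable, but unstated) assumption $\log|E| = O(\log|V|)$.
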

\begin{proof}
  Consider Algorithm~\ref{alg:approximate-temporal-coverage-centrality} and let $\widetilde{\TC}(\biv)$ denote its output.
  Algorithm~\ref{alg:approximate-temporal-coverage-centrality} issues ${\rm O}(\log^2 |V| / \epsilon^2)$ queries since $\ldt$ and $\eat$ can be computed with ${\rm O}(\log |V|)$ queries (see Lemma~\ref{lem:compute-eat-and-ldt}).
  Let $X_i$ be the temporal edge at which we increment $r$ in the $i$-th loop and $\overline{X} = (1 / k) \sum_{i=1}^k X_i$.
  Note that $\E[\widetilde{\TC}(\biv)] = \E[\overline{X}] = (1 / k)\sum_{i=1}^k \E[X_i] = \TC(\biv)$,
  where $\TC(\biv)$ is the TCC value of $\biv$.
  Since $X_1,X_2,\ldots,X_k$ are independent random variables in $[0,1]$,
  by Lemma~\ref{lem:hoeffding},
  we have
  \begin{align*}
    & \Pr[|\widetilde{\TC}(\biv) - \TC(\biv) | \geq \epsilon]
    = \Pr[|\overline{X} - \TC(\biv) | \geq \epsilon] \\
    & \leq 2\exp(-2 \epsilon^2 \frac{1}{2\epsilon^2}\log(2|V|^2))
    = 2\exp(-\log(2|V|^2)) \\
    & = \frac{1}{|V|^2}.
  \end{align*}
  Hence, the lemma holds.
\end{proof}

Recalling that the query time of the reachability oracle is tiny,
we find that the running time of Algorithms~\ref{alg:approximate-temporal-coverage-centrality} can be seen as polylogarithmic in the input size.
This is the great advantage of TCC and TBCC against other centrality notions.

\bibliographystyle{unsrt}

\clearpage
\twocolumn[
\setcounter{page}{1}
\setcounter{figure}{0}
\begin{center}
{\bf {\Large Electronic Supplementary Material}}\\
\vspace*{3mm}
{\large for}\\
\vspace*{3mm}
\textit{{\large Taro Takaguchi, Yosuke Yano, and Yuichi Yoshida}}\\
\vspace*{3mm}
{\bf {\large Coverage centralities for temporal networks}}\\
\end{center}
]

\newpage

\renewcommand{\thesection}{S\arabic{section}}
\renewcommand{\thefigure}{S\arabic{figure}}
\renewcommand{\thetable}{S\arabic{table}}
\renewcommand{\theequation}{S.\arabic{equation}}

\begin{figure*}
\centering
\includegraphics[width=0.48\hsize]{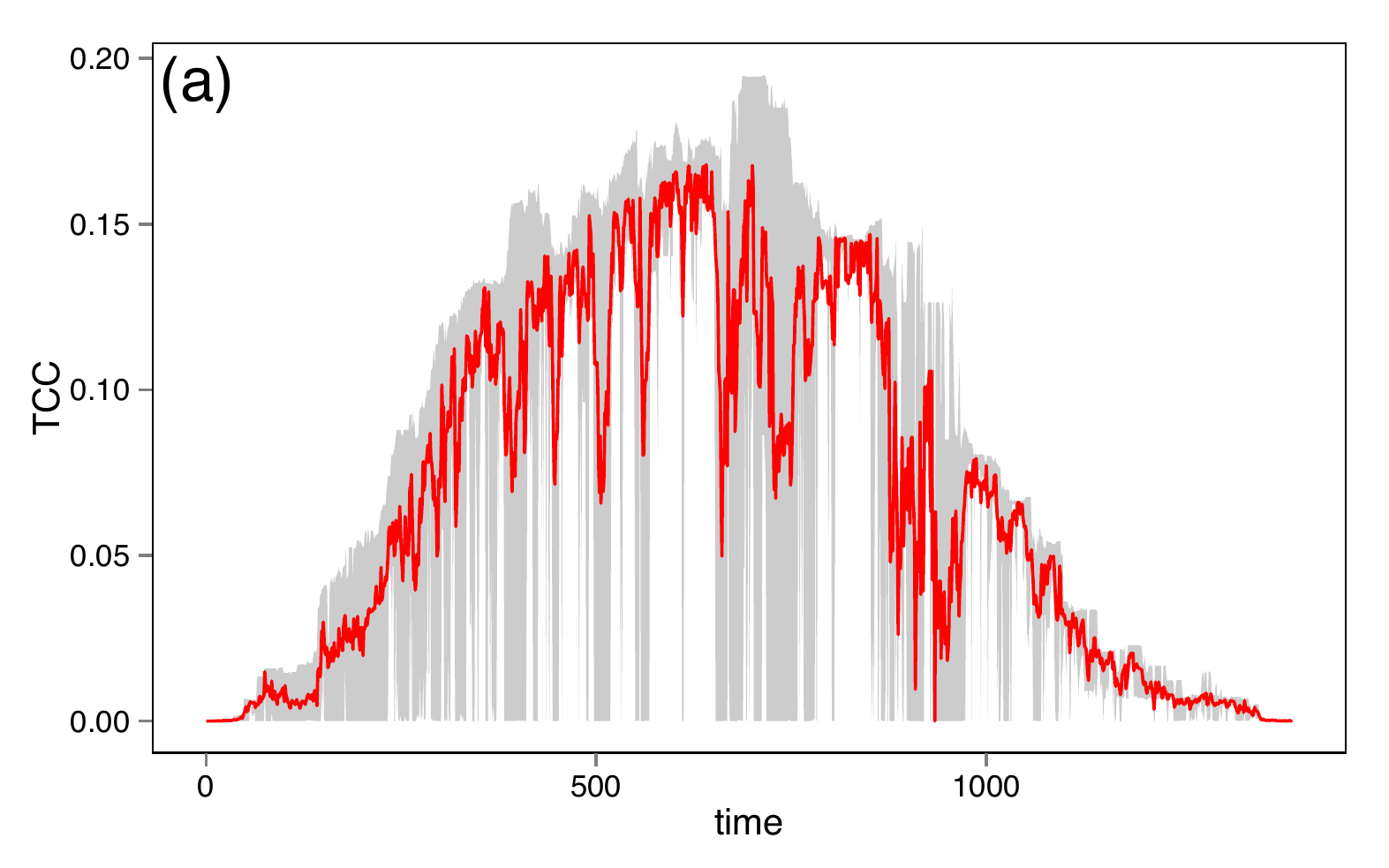}
\includegraphics[width=0.48\hsize]{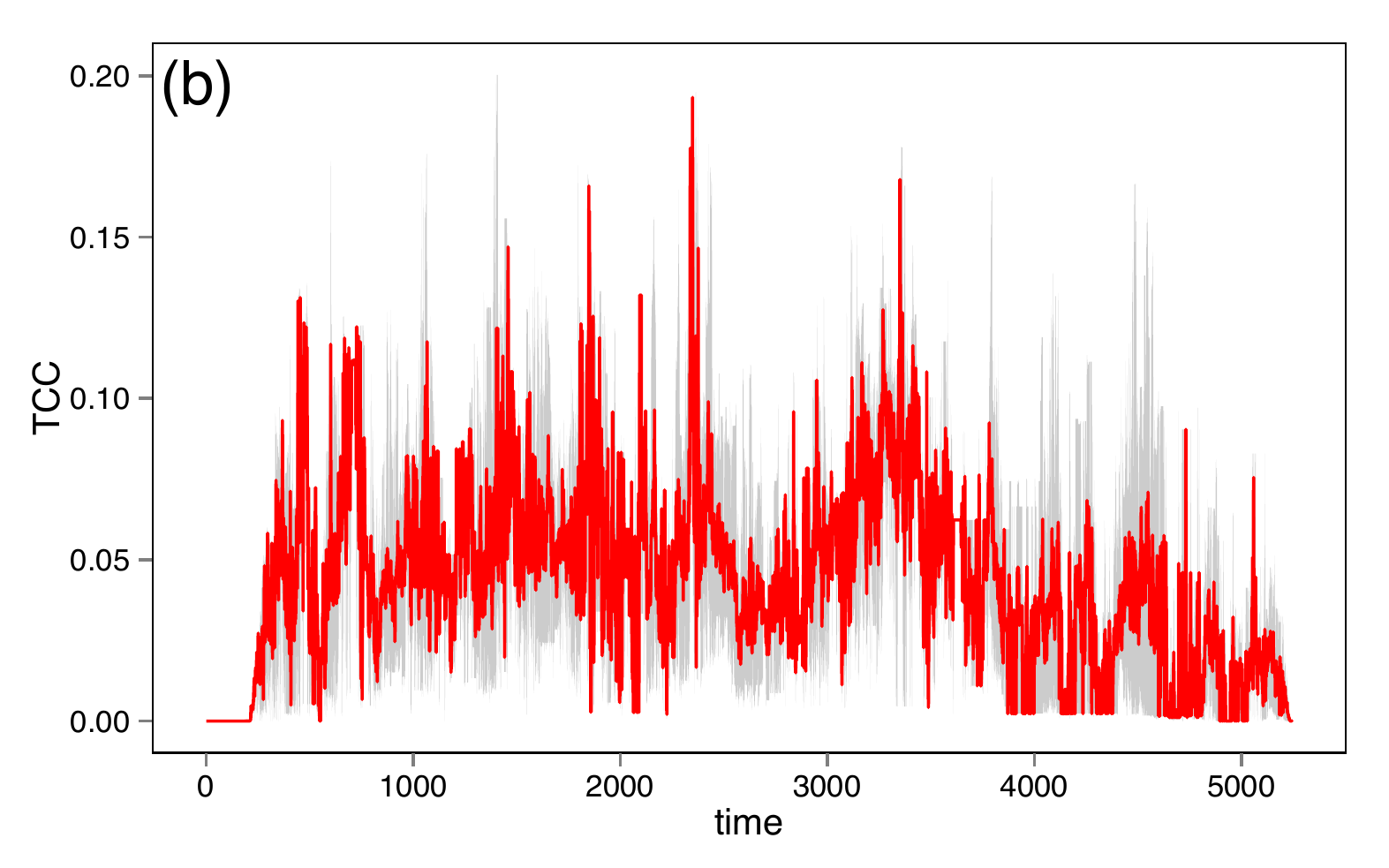}
\includegraphics[width=0.48\hsize]{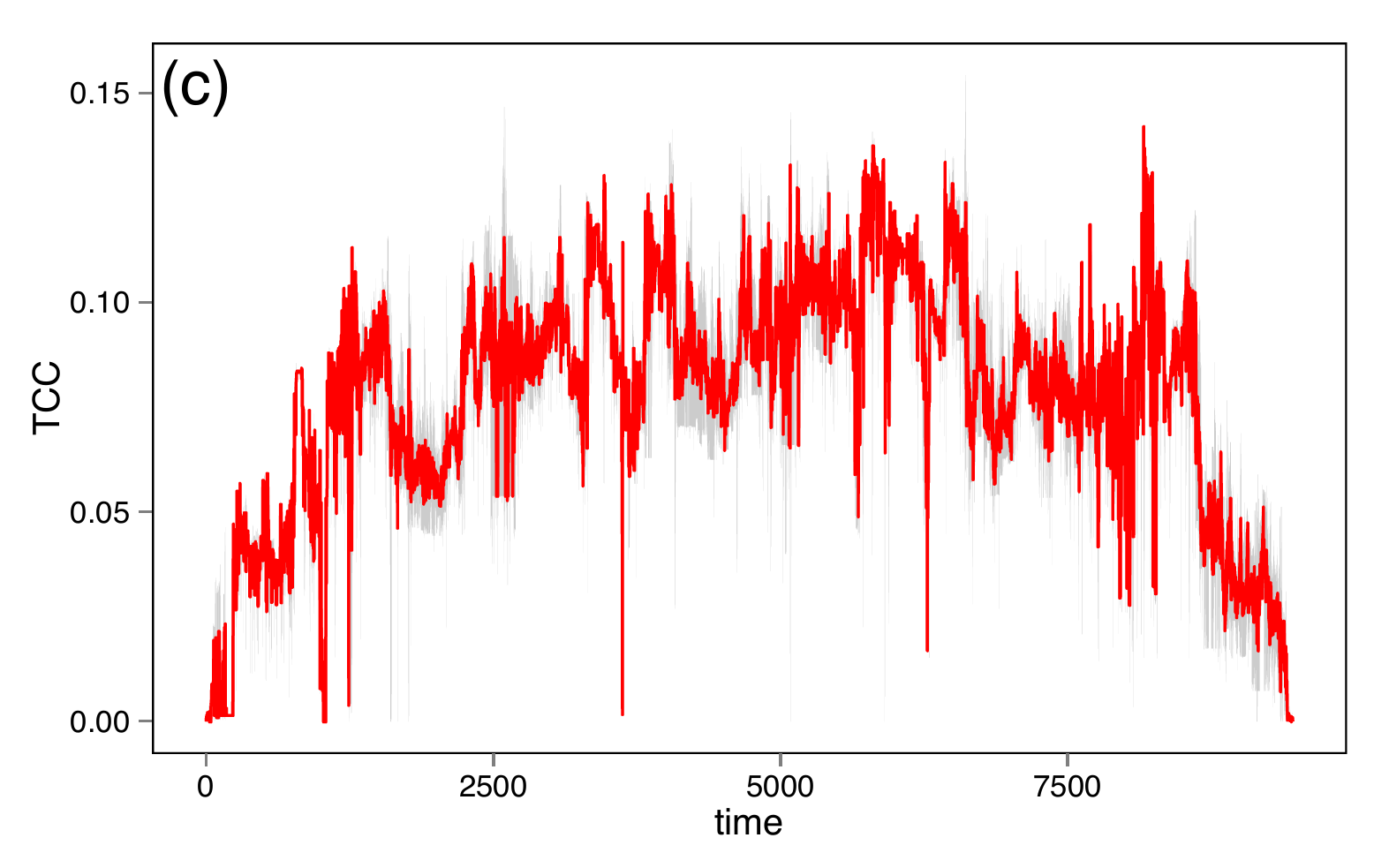}
\includegraphics[width=0.48\hsize]{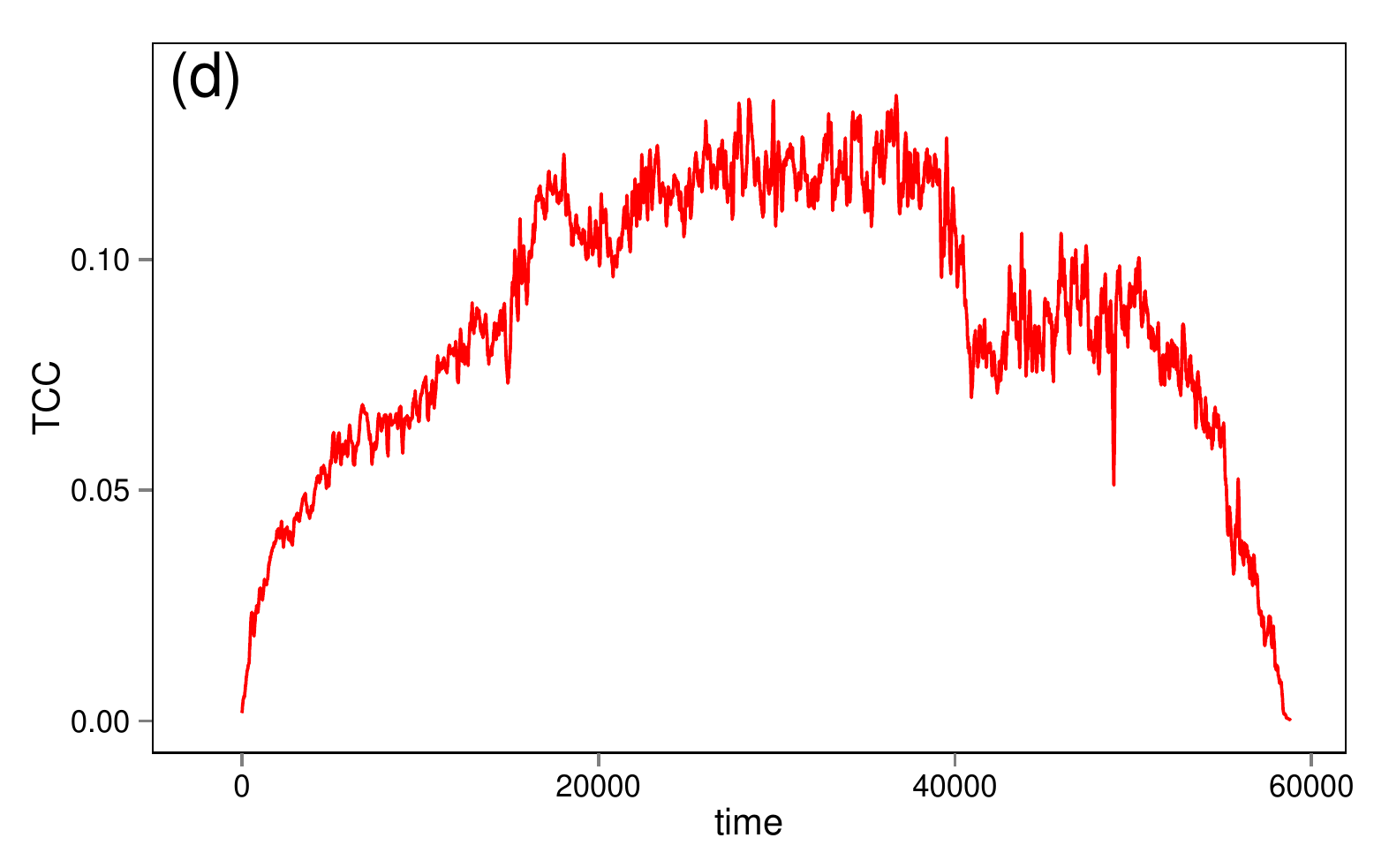}
\includegraphics[width=0.48\hsize]{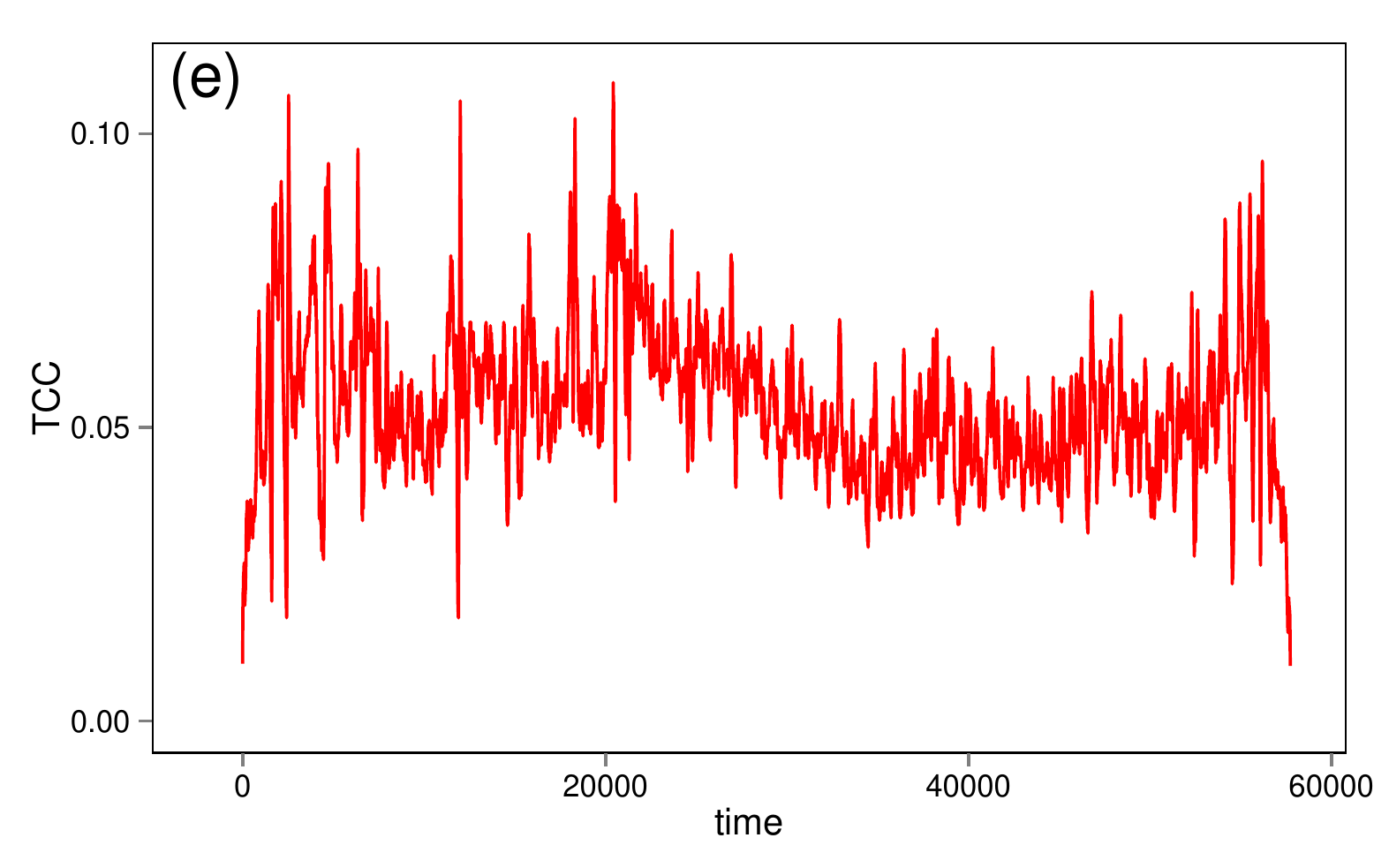}
\caption{Average (solid line) and $10-90\%$ values (shaded areas) of TCC at each time for (a) Infectious, (b) HT09, (c) Hospital, (d) Irvine, and (e) Email.
We consider only the temporal vertices involved in temporal edges with other vertices to calculate the statistics.
For (d) and (e), we smoothed the curves by taking the average over a sliding window with a length of 100 units of time, because the time resolutions of the observations are so high that there are not sufficient number of temporal vertices to take the average at most of the time points.}
\end{figure*}

\end{document}